\newcommand{\ind}{\perp\!\!\!\!\perp}
\newcommand{\e}{\mathbb{E}}
\theoremstyle{definition}
\newtheorem{corollary}{Corollary}[section]
\newcommand\numberthis{\addtocounter{equation}{1}\tag{\theequation}}
\newtheorem{theorem}{Theorem}[section]
\newtheorem{lemma}{Lemma}[section]
\newtheorem{example}{Example}[section]
\newtheorem{remark}{Remark}[section]
 \def\@textbottom{\vskip \z@ \@plus 1pt}
 \let\@texttop\relax
\newtheorem{definition}{Definition}
\title{Consistent Causal Inference from Time Series with PC Algorithm and its Time-Aware Extension}
\author{
  Rahul Biswas\\ 
  Department of Statistics\\
  University of Washington\\
  Seattle, WA, 98195\\
  \texttt{rbiswas1@uw.edu}\\
  \And
  Somabha Mukherjee \\
  Department of Statistics and Data Science\\
  National University of Singapore\\
  Singapore, 117546\\
  \texttt{somabha@nus.edu.sg}\\
}
\begin{document}
\maketitle

\begin{abstract}
The estimator of a causal directed acyclic graph (DAG) with the PC algorithm is known to be consistent based on independent and identically distributed samples. In this paper, we consider the scenario when the multivariate samples are identically distributed but not independent. A common example is a stationary multivariate time series. We show that under a standard set of assumptions on the underlying time series involving $\rho$-mixing, the PC algorithm is consistent in this dependent sample scenario. Further, we show that for the popular time series models such as vector auto-regressive moving average and linear processes, consistency of the PC algorithm holds. We also prove the consistency for the Time-Aware PC algorithm, a recent adaptation of the PC algorithm for the time series scenario. Our findings are supported by simulations and benchmark real data analyses provided towards the end of the paper.
\end{abstract}
\paragraph{Keywords} Causal Inference $\cdot$ Time Series $\cdot$ Directed Acyclic Graph  $\cdot$ Consistency $\cdot$ Mixing
\section{Introduction}
Directed probabilistic graphical models are a popular tool to find causal relations between variables from their observational data. The causal relations are often represented by directed acyclic graphs (DAGs), which have its nodes as the variables and edges encoding information on conditional dependence between the variables. The conditional dependencies are specified by the directed Markov property \citep{lauritzen1996graphical}.

Estimation of the DAG from independent and identically distributed (i.i.d.) data can be conducted by score-based, constraint-based or hybrid methods. Score-based methods search over the space of all possible DAGs and maximize a goodness-of-fit score such as GES (greedy equivalent search) \citep{chickering2002optimal} and GIES (greedy interventional equivalent search, for data with experimental interventions) \citep{hauser2012characterization}. Constraint-based methods first use conditional dependence tests to find an undirected skeleton graph and then use the conditional dependence information in the skeleton to partially direct the edges,such as in the PC algorithm (acyclic graph, no latent confounders, no selection bias) \citep{spirtes2000causation} and FCI algorithm (acyclic graph, latent confounders, selection bias) \citep{spirtes1999algorithm}. Finally, hybrid methods combine score and constraint-based approaches for example the max-min hill climbing method \citep{tsamardinos2006max}.

Among the different approaches, PC algorithm remains one of the most popular methods for causal inference \citep{spirtes2000causation,glymour2019review}. The PC algorithm estimates the completed partially directed acyclic graph (CPDAG) which represents the class of all DAGs which are Markov Equivalent and thereby indistinguishable from observational data. In terms of computational time complexity, the PC algorithm is exponential (as a function of the number of nodes) in the worst case, but if the true DAG is sparse, which is often a reasonable assumption, it reduces to polynomial complexity. The question of asymptotic guarantees in estimation of the CPDAG by the PC algorithm has been studied in \citep{spirtes2000causation,robins2003uniform,kalisch2007estimating}. In particular, assuming only faithfulness, it has been shown that pointwise consistency of the PC algorithm can be achieved based on i.i.d. observations \citep{spirtes2000causation}. Furthermore, high dimensional uniform consistency of the PC algorithm based on i.i.d. samples has been shown to be achieved with a further Gaussian distribution assumption \citep{kalisch2007estimating}, which popularized the use of PC algorithm in high dimensional settings.

Despite its asymptotic guarantees in estimation from i.i.d. data, in the dependent sampling i.e. time series scenario, the consistency of the PC algorithm is still unclear. Traditionally methods like Granger Causality and Transfer Entropy have been used for answering causal queries \citep{dhamala2008analyzing,abdul2014quantifying} in time series, and recently the usage of graphical models, and in particular the PC algorithm, has drawn attention \citep{biswas2022statistical1,smith2011network,dahlhaus2003causality,ebert2012causal,runge2019detecting}. Recent approaches such as the Time-Aware PC (TPC) \citep{biswas2022statistical2} algorithm uses the PC algorithm after transformations on the time series and has been shown to have effective performance across simulated and benchmarking real datasets. To obtain clarity on the consistency of such methods as the TPC algorithm which builds upon the PC algorithm in time series, it is first essential to have clarity on the consistency of the PC algorithm in a time series scenario.


In this paper we prove consistency of the PC algorithm in the dependent sampling scenario, where the number of variables $p$ is fixed, and there are $n$ dependent identically distributed samples. A common example is a stationary multivariate time series. In particular, we show that under a standard set of assumptions, the PC algorithm is consistent in the dependent sampling scenario, for two well-known tests for conditional dependence. Specifically, we show that the consistency of the PC algorithm holds for a multivariate time series, under conditions of $\rho$-mixing (see \ref{eq:rhomixing}) with `sufficiently fast' convergence of the maximal correlation coefficients and faithfulness of the random vector at each time. This enables us to establish the consistency of PC algorithm in common time series models such as vector auto-regressive moving average models and linear processes. We also show that the Time-Aware PC algorithm achieves consistency under similar assumptions but with faithfulness with respect to a DAG across time points. The latter is more reasonable in a time series context with interactions between variables across different times. In the following section, we describe the conditions for causal inference that will provide the framework for the methods studied in the paper.


\section{Causal Structure Learning}

We begin by describing the general framework for causal structure learning by directed graphical models. A graph $G=(V,E)$ consists of a set of nodes $V=\{1,\ldots,p\}$ for fixed $p$, and a set of edges $E\subseteq V\times V$. In our setting the set of nodes corresponds to the components of a $p$-variate random vector $\bm{X}\in \mathbb{R}^p$. An edge $(u,v) \in E$ is called directed and denoted by $u\rightarrow v$ if $(u,v)\in E$ but $(v,u)\not \in E$. An edge between $u$ and $v$ is called undirected and denoted by $u-v$ if both $(u,v)\in E$ and $(v,u)\in E$. A Directed Acyclic Graph (DAG) is a graph with all edges directed and devoid of directed cycles. 

We will now define some graphical preliminaries which will be used in this section (see \citep{drton2016structure}). In a DAG $G$, we call a pair of nodes $v$ and $w$ adjacent if $v\rightarrow w \in E$ or $w\rightarrow v \in E$. A path is a sequence of distinct nodes in which successive nodes are adjacent. $v_0$ and $v_k$ are called endpoints of the path $\pi = (v_0, v_1 \ldots, v_k)$. For a  non-endpoint $v_i$, if $v_{i-1}\rightarrow v_i \leftarrow v_{i+1}$ is a subpath of $\pi$ then $v_i$ is called a collider on $\pi$. Else, $v_i$ is called a non-collider on $\pi$. $v_0$ is called an ancestor of $v_k$ (equivalently, $v_k$ is called a descendant of $v_0$) if every edge on $\pi$ is of the form $v_{i-1}\rightarrow v_i$. A common convention is to refer to $v$ as an ancestor and descendant of itself. We denote the set of ancestors and descendants of $v$ in $G$ by $an_G(v)$ and $de_G(v)$ respectively, and define $an_G(C):= \cup_{v\in C}an_G(v), ~~de_G(C):= \cup_{v\in C}de_G(v)$. We refer to the skeleton of a DAG $G$ as the undirected graph obtained by replacing directed edges in $G$ with undirected edges. An ordered triplet of nodes $(u,v,w)$ with $u$ and $w$ not adjacent in $G$ and $G$ containing the directed edges $u\rightarrow v$ and $w\rightarrow v$ is called a v-structure in $G$. The following definition of d-separation is taken from \citep{drton2016structure}.

\begin{definition}[d-separation]
Two nodes $v$ and $w$ in a DAG $G$ are \emph{d-connected} given $C\subset V\setminus \{v,w\}$ if $G$ contains a path $\pi$ with endpoints $v$ and $w$ such that (i) the colliders on $\pi$ are in $an_G(C)$, and (ii) no non-collider on $\pi$ is in $C$. Generalizing to sets, two disjoint sets $A, B \subset V$ are \emph{d-connected} given $C\subset V\setminus (A\cup B)$ if there are two nodes $v\in A$ and $w\in B$ that are d-connected given $C$. If this is not the case, then $C$ \emph{d-separates} $A$ and $B$.
\end{definition}

\subsection{Conditions for Causal Inference}\label{sec:causalconds}
Spirtes and Pearl pioneered the use of DAGs in causal inference using the condition of \emph{causal sufficiency}, i.e. absence of hidden (or latent) variables \citep{spirtes2000causation}, and two conditions that relate the DAG and probability distributions, namely \emph{directed Markov property} and \emph{faithfulness}.

\begin{enumerate}
\item \textbf{Directed Markov Property.} Let $(X_v:v\in V)\sim P$. $P$ is said to satisfy the Directed Markov Property with respect to $G$ if, for all $A,B,C\subset V$,
\begin{equation}\label{eq:dmp}
C \text{ d-separates } A \text{ and } B \Rightarrow X_A\ind X_B\vert X_C
\end{equation}
where $\ind$ represents independence of two random variables.
\item \textbf{Faithfulness.} Let $P$ satisfy the Directed Markov Property. $P$ is said to be faithful with respect to $G$ if the converse of \eqref{eq:dmp} holds, consequently i.e., for all $A,B,C\subset V$,
\begin{equation}\label{eq:ftflns}
X_A\ind X_B\vert X_C \Leftrightarrow C \text{ d-separates } A \text{ and } B 
\end{equation}
\end{enumerate}

It is noteworthy that faithfulness constrains the class of probability distributions. See Chapter 3.5.2 in \citep{spirtes2000causation} for an example of a non-faithful distribution. In this paper, we mostly limit ourselves to the multivariate Gaussian family, where non-faithful distributions form a Lebesgue null set in the space of distributions associated with a DAG G (see \citep{meek1995strong}).

Most algorithms for estimating a DAG $G$ under the Conditions 1-3 cannot distinguish between two DAGS which are \emph{Markov equivalent} (two DAGs $G_1$ and $G_2$ are said to be Markov equivalent when the set of distributions that are faithful with respect to $G_1$ is the same as the set of distributions that are faithful with respect to $G_2$ \citep{chickering2002learning}). Instead, one can identify the Markov equivalence class of DAGs. The Markov equivalence classes of DAGs can also be characterized by the following criterion: Two DAGs $G_1$ and $G_2$ are Markov equivalent if and only if they have the same skeleton and v-structures \citep{verma2022equivalence}. It is commonplace for a Markov equivalence class to be conveniently represented by a Completed Partially Directed Acyclic Graph (CPDAG) which has a directed edge $v \rightarrow w$ if $v\rightarrow w$ is present in all the DAGs belonging to the Markov equivalence class of the DAG and has an undirected edge between $v$ and $w$ if both $v\rightarrow w$ and $w\rightarrow v$ are present among the DAGs. A CPDAG represents a Markov equivalence class uniquely, since two CPDAGs are identical if and only if they represent the same Markov equivalence class \citep{chickering2002learning}.

Therefore, instead of estimating the DAG satisfying the conditions of causal inference, the main goal is to estimate the CPDAG. We focus on the PC algorithm in this paper which is a popular method for this estimation task \citep{spirtes2000causation, pearl2009causality}. Although the main goal is to estimate the CPDAG, PC algorithm divides the task into two parts 1) Estimation of the skeleton of the true DAG, and 2) Orientation of edges of the skeleton to obtain the CPDAG estimate. All inference from data takes place in the first part. In the following section we describe the steps of the PC algorithm which will be used later to establish conditions for its consistency in dependent data scenario.

\subsection{Estimation of the CPDAG: The PC algorithm}\label{subsec:pcpop}
The PC algorithm is a popular method to estimate the CPDAG from observed data. It has been shown to be pointwise consistent based on independent and identically distributed data using a consistent test for conditional dependence, and uniformly consistent\footnote{See \citep{robins2003uniform} for definition of uniform consistency in this setting.} when additionally the data distribution is Gaussian \citep{kalisch2007estimating}. 

The population version of the algorithm has two parts: 1) To find the skeleton $G_{skel}$ of the true DAG $G$ by representing the variables as nodes of an empty DAG and putting an undirected edge between each pair of nodes if they are independent or conditionally independent given any other variable(s). 2) The skeleton is then converted into the CPDAG $G_{CPDAG}$ by orienting the undirected edges using rules for orientation. In the sample version the independence and conditional independence statements are replaced by a statistical test based on the sample. The parts of the population version of the PC algorithm are outlined in Algorithms \ref{alg:pcpop} and \ref{alg:pcorient}. For a DAG $G$ such that the distribution $P$ is faithful to $G$, it is proved that the PC algorithm \ref{alg:pcpop} constructs the true skeleton of the DAG \citep{kalisch2007estimating}, and algorithm \ref{alg:pcorient} outputs the true CPDAG \citep{meek1995causal}. In the sample version only the conditional independence statements in finding the skeleton in Algorithm \ref{alg:pcpop} are replaced with a statistical test for conditional dependence based on data.
\begin{algorithm}[t!]
\SetKwInOut{Input}{Input}
\SetKwInOut{Output}{Output}
\Input{Node set V, Conditional Independence Information}
\Output{Skeleton $G_{skel}$, separation sets $S$}

Start with a complete undirected graph $C$ on the vertex set V.

Set $\ell = -1; \quad G_{skel}=C$

\Repeat{for each $i$,$j$ adjacent: $|adj(G_{skel},i) \setminus
    \{j\}| \le \ell$.}{

    $\ell=\ell +1$

    \Repeat{all $i,j$ adjacent such that $|adj(G_{skel},i) 
    \setminus \{j\}| \geq \ell$ and $\bm{k} \subseteq adj(G_{skel},i) \setminus
    \{j\}$ with 
    $|\bm{k}|=\ell$ have been tested for conditional independence.}{

    Select a new ordered pair $i,j$ adjacent in $G_{skel}$ with $|adj(G_{skel},i) \setminus \{j\}| \geq \ell$ 

        \Repeat{edge $i,j$ is deleted or all $\bm k \subset  adj(G_{skel},i)\setminus \{j\}$ with $|\bm{k}|=\ell$ are selected}{
            Choose $\bm{k} \subseteq adj(G_{skel},i) \setminus \{j\}$ with $|\bm{k}|=\ell$

            \uIf{$i\ind j ~\vert~ \bm{k}$}{
                
                Delete edge $i-j$
                
                Denote this new graph by $G_{skel}$
                
                Save $\bm k$ in $S(i,j)$ and $S(j,i)$}
}
}
}

\caption{Population PC - Finding the skeleton graph}\label{alg:pcpop}
\end{algorithm}
\begin{algorithm}[t!]
\SetAlgoLined
\SetKwInOut{Input}{Input}
\SetKwInOut{Output}{Output}
\Input{Skeleton $G_{skel}$, separation sets $S$}
\Output{$G_{CPDAG}$}
\textbf{for all} $i,j$ non-adjacent\\ 
\quad \quad Orient $i-c-j$ whenever $i,j$ are not conditionally independent given $c$ into $i\rightarrow c\leftarrow j$.\\

\quad \quad Orient $i\rightarrow j-c$ with $i,j$ non-adjacent into $i\rightarrow j\rightarrow c$\\
\textbf{end for}\\

\textbf{for all} $i,j$ adjacent\\
\quad \quad Orient $i-j$ whenever $i\rightarrow c \rightarrow j$ into $i\rightarrow j$\\
\quad \quad Orient $i-j$ whenever $i-c\rightarrow j$ and $i-d\rightarrow j$ for some $c,d$ nonadjacent into $i\rightarrow j$.\\
\quad \quad Orient $i-j$ whenever $i-c\rightarrow d$ and $c\rightarrow d\rightarrow j$ for some $c,d$ nonadjacent into $i\rightarrow j$.\\
\textbf{end for}\\
\caption{Population PC - Directing the skeleton into a CPDAG}\label{alg:pcorient}
\end{algorithm}

\subsection{Conditional Dependence Test in the Gaussian Regime: Pearson's Partial Correlations}\label{sec:samplepc}
In the Gaussian scenario conditional dependence can be inferred from partial correlations.  Assume that $\mathbf{X}=(X_1,\ldots,X_p)$ is a $p$-dimensional Gaussian random vector, for some fixed integer $p$. For $i \neq j \in \{1,\ldots ,p\},\ \bm{k} \subseteq \{1,\ldots ,p\}
  \setminus 
\{i,j\}$, denote by $\rho_{i,j|\bm{k}}$ the 
partial correlation between $X_i$ and $X_j$ given
$\{X_r:\ r \in \bm{k}\}$. We can infer conditional dependence by inferring partial correlations due to the following elementary property of the multivariate Gaussian distribution (see Prop. 5.2 in \citep{lauritzen1996graphical}) that, 

\[\rho_{i,j|\bm{k}}=0\text{ if and only if }X_i\ind X_j ~\vert~ \{X_r~:\ r \in \bm{k}\}.\] 

\medskip
The sample partial correlation
$\hat{\rho}_{i,j|\bm{k}}$ can be 
calculated via regression or by using the following identity. Denote $k=\vert \bm k\vert$ and let without loss of generality $\{X_r;\ r \in \bm{k}\}$ be the last $k$ entries in $\bm{X}$. Let $\Sigma:= \text{cov}(\bm{X})$ with $\Sigma = \left(\begin{array}{cc}
    \Sigma_{11} & \Sigma_{12} \\
    \Sigma_{21} & \Sigma_{22}
\end{array}\right)$ where $\Sigma_{11}$ is of dimension $(p-k)\times (p-k)$, $\Sigma_{22}$ is of dimension $k \times k$, $\Sigma_{11.2}=\Sigma_{11}-\Sigma_{12}\Sigma_{22}^{-1}\Sigma_{21}$, and $\hat{\Sigma}$ and $\hat{\Sigma}_{11.2}$ be the sample versions of $\Sigma$ and $\Sigma_{11.2}$ based on the sample covariance matrix. Let $\bm e_1,\ldots,\bm e_p$ be the canonical basis vectors of $\mathbb{R}^p$. It follows from \citep{muirhead2009aspects} that,
\begin{align*}
\rho_{i,j\vert \bm{k}} &= \frac{\bm{e}_i^{\top}\Sigma_{11.2}\bm{e}_j}{\sqrt{(\bm{e}_i^{\top}\Sigma_{11.2}\bm{e}_i(\bm{e}_j^{\top}\Sigma_{11.2}\bm{e}_j))}},\\
\hat{\rho}_{i,j\vert \bm{k}} &= \frac{\bm{e}_i^{\top}\hat{\Sigma}_{11.2}\bm{e}_j}{\sqrt{(\bm{e}_i^{\top}\hat{\Sigma}_{11.2}\bm{e}_i(\bm{e}_j^{\top}\hat{\Sigma_{11.2}}\bm{e}_j))}}
\end{align*}
For testing whether a partial correlation is zero or not, we first apply Fisher's
z-transform
\begin{eqnarray}\label{ztrans}
Z(i,j|\bm{k}) = g(\hat{\rho}_{i,j\vert \bm k})=\frac{1}{2} \log \left (\frac{1 +
    \hat{\rho}_{i,j|\bm{k}}}{1 - \hat{\rho}_{i,j|\bm{k}}} \right).
\end{eqnarray}

Let also, $z(i,j\vert \bm{k}) = g(\rho_{i,j\vert \bm k})$. Note that $z(i,j\vert \bm{k})=0 \Leftrightarrow \rho(i,j\vert \bm{k})=0$, and hence, $z(i,j|\bm k) = 0 \Leftrightarrow X_i\ind X_j ~\vert~ \{X_r~:\ r \in \bm{k}\}$. We will show that $Z(i,j|\bm k)$ is a consistent estimator of the population parameter $z(i,j|\bm k)$, and hence, a consistent conditional dependence test can be constructed based on the statistic $Z(i,j|\bm k)$.
Using such a test for conditional dependence, the sample version of the PC algorithm (see \ref{alg:pcsample}) only replaces line 9 in the population PC in \ref{alg:pcpop} about conditional independence with the statistical test.

\subsection{Conditional Dependence Test in the Non-Gaussian Regime: The Hilbert Schmidt Criterion}\label{sec:samplepch}

When the data is non-Gaussian, zero partial correlations do not necessarily imply conditional independence. In such a situation, the Hilbert Schmidt criterion acts as a convenient test for conditional dependence. We describe this concept below.

Given $\mathbb{R}$-valued random variables $X,Y$ and the random vector $\bm Z$ we propose the use of the following statistic for testing the conditional dependence of $X,Y\vert \bm{Z}$ (see \citep{fukumizu2007kernel}):

\[
\hat{H}_{n}(X,Y\vert\bm{Z}) = Tr[R_{\overset{..}{Y}}R_{\overset{..}{X}}-2R_{\overset{..}{Y}}R_{\overset{..}{X}}R_{\bm Z} + R_{\overset{..}{Y}}R_{\bm Z}R_{\overset{..}{X}}R_{\bm Z}]
\]
where $G_X,G_Y,G_{\bm Z}$ are the centered Gram matrices with respect to a positive definite and integrable kernel $k$, that is, $G_{X,ij}=<k(\cdot,X_i)-\hat{m}_X^{(n)},k(\cdot,X_j)-\hat{m}_X^{(n)}>$ with $\hat{m}_X^{(n)}=\frac{1}{n}\sum_{i=1}^n k(\cdot,X_i)$, and $R_X=G_X(G_X+n\epsilon_n I_n)^{-1}, R_Y=G_Y(G_Y+n\epsilon_n I_n)^{-1},R_{\bm Z}=G_{\bm Z}(G_{\bm Z}+n\epsilon_n I_n)^{-1}$ and $\overset{..}{X}=(X,{\bm Z}), \overset{..}{Y}=(Y,{\bm Z})$. Under some regularity assumptions mentioned below, it follows from the proof of Theorem 5 in \citep{fukumizu2007kernel} that $\hat{H}_n(X,Y\vert \bm{Z})$ is a consistent estimator of $H(X,Y\vert \bm Z):= \|V_{\overset{..}{X}\overset{..}{Y}|{\bm Z}}\|^2$, where $$V_{\overset{..}{X}\overset{..}{Y}|{\bm Z}} := \Sigma_{\overset{..}{X}\overset{..}{X}}^{-1/2}(\Sigma_{\overset{..}{X}\overset{..}{Y}} - \Sigma_{\overset{..}{X}\bm Z} \Sigma_{\bm Z \bm Z}^{-1}\Sigma_{\bm Z \overset{..}{Y}}) \Sigma_{\overset{..}{Y} \overset{..}{Y}}^{-1/2}$$ and $\Sigma_{UV}$ denotes the covariance matrix of $U$ and $V$. It follows from \citep{fukumizu2007kernel} that $X\ind Y |\bm Z \Leftrightarrow H(X,Y|\bm Z)=0$. Hence, a consistent conditional dependence test of $X,Y | \bm Z$ can be constructed based on the statistic $H_n(X,Y| \bm Z)$. The advantage of this method is that unlike the Pearson partial correlation, it does not require Gaussianity of the data to decide conditional independence, and hence can be used in the PC algorithm if the underlying time series is non-Gaussian.

\begin{algorithm}[t]
\SetKwInOut{Input}{Input}
\SetKwInOut{Output}{Output}
\Input{Dataset $\{\bm{X}_i\}_{i=1}^n$}
\Output{CPDAG estimate $\hat{G}_{CPDAG}$}
Run the Population PC Algorithm \ref{alg:pcpop} to find the skeleton but replace in line 9 the statement about conditional independence of $i,j$ given $\bm{k}$ by the event of acceptance of conditional independence by a consistent test for conditional dependence.

Extend the skeleton to a CPDAG using Algorithm \ref{alg:pcorient}.
\caption{Sample PC algorithm}\label{alg:pcsample}
\end{algorithm}

\section{Consistency of the PC Algorithm from Dependent Samples}

While the sample PC algorithm as described in Algorithm \ref{alg:pcsample} is known to be consistent in estimating the true CPDAG from independent and identically distributed data provided one uses a consistent conditional dependence test \citep{kalisch2007estimating, fukumizu2007kernel}
, its consistency, to the best of our knowledge, is unknown when the data are not independent such as in a time series scenario.  In the simple case when the data is Gaussian, the analysis for i.i.d. data in \citep{kalisch2007estimating, hotelling1953new} involves calculation of the density function of sample partial correlations and thus the conclusions do not trivially follow in a dependent data setting. In this section, we establish some structure on the dependent time series data that ensures consistency of the PC algorithm. An important structure that we assume for the time series is strongly mixing, which assumes a form of weak dependence among the time series variables with increasing time gap between the variables. As a preparation we introduce a general consistency framework below.

\subsection{The General Consistency Framework}
The aim of this section is to show that under certain assumptions on the time series, the CPDAG estimate of the PC algorithm is consistent, that is, the probability of an error in estimation of the CPDAG converges to $0$. For this, it suffices to show that the probability of an error in testing conditional dependence converges to $0$, a fact that is guaranteed if one works with a consistent test for conditional dependence in the PC algorithm.  To be precise, suppose that we start with a measure $\mu(i,j|\bm k)$ of conditional dependence, which satisfies the property that
\[\mu(i,j|\bm k)=0\text{ if and only if }X_i\ind X_j ~\vert~ \{X_r~: \ r \in \bm{k}\}~,\]
and suppose that for each $i,j, \bm k$, we have a sequence $\hat{\mu}_n(i,j|\bm k)$ of consistent estimators of $\mu(i,j|\bm k)$. Then, the test of conditional dependence based on $\hat{\mu}_n(i,j|\bm k)$, which will reject conditional independence of $X_i$ and $X_j$ given $\{X_r\}_{r\in \bm k}$ if and only if $|\hat{\mu}_n(i,j|\bm k)| > \gamma$ for some thresholding parameter $\gamma$, will be consistent. Consequently, the error probability in estimating the CPDAG converges to $0$. The threshold $\gamma$ can be obtained using a bootstrap method, e.g. the stationary bootstrap for dependent samples constituting a stationary time series \citep{politis1994stationary}.

Below, we state the result that guarantees consistent estimation of the DAG skeleton.

\begin{theorem}\label{thm:skelconsistency}
Denote by $\hat{G}_{skel,n}$ the estimate of the skeleton graph from the PC algorithm with $\hat{\mu}_n$ as the test statistic for conditional dependence, and by $G_{skel}$ the true skeleton of the DAG $G$. Then,
\[
P(\hat{G}_{skel,n}=G_{skel})\rightarrow 1 \quad\text{ as }\quad n\rightarrow \infty.
\]
\end{theorem}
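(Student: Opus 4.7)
My strategy is to reduce Theorem \ref{thm:skelconsistency} to a finite union bound over all conditional independence queries the algorithm could ever issue, leveraging the correctness of the population PC algorithm. First, because $p$ is fixed, the collection of triples $(i,j,\bm{k})$ with $i\neq j$ and $\bm{k}\subseteq V\setminus\{i,j\}$ has cardinality at most $M := p(p-1)\,2^{p-2}$. Any execution of Algorithm \ref{alg:pcpop} queries at most $M$ distinct conditional independence statements, irrespective of which queries are actually visited along a given sample path. This bound is what lets a union bound absorb the adaptive, data-dependent ordering of the tests.

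Second, under the faithfulness assumption stated in Section \ref{sec:causalconds}, the population PC algorithm fed with correct oracle answers to all such queries outputs exactly $G_{skel}$, by the soundness argument in \citep{kalisch2007estimating}. Therefore, on the event
\[
\mathcal{E}_n := \bigcap_{(i,j,\bm{k})}\Bigl\{\mathbf{1}\{|\hat{\mu}_n(i,j|\bm{k})|>\gamma_n\} = \mathbf{1}\{\mu(i,j|\bm{k})\neq 0\}\Bigr\},
\]
Algorithm \ref{alg:pcsample} returns precisely $G_{skel}$, and the problem reduces to showing $P(\mathcal{E}_n)\to 1$. Since $\{\hat{\mu}_n(i,j|\bm{k})\}$ is, by hypothesis, a consistent sequence of estimators of $\mu(i,j|\bm{k})$ for each triple, and there are only finitely many triples, I choose the threshold $\gamma_n \to 0$ slowly enough that $\gamma_n < m/2$ for all large $n$, where $m := \min\{|\mu(i,j|\bm{k})| : \mu(i,j|\bm{k})\neq 0\} > 0$ (well-defined because the minimum is over a finite set and is strictly positive by faithfulness). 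Then for each triple the test is consistent: when $\mu(i,j|\bm{k})=0$, $P(|\hat{\mu}_n|>\gamma_n)\to 0$ since $\hat{\mu}_n\to 0$ in probability; when $\mu(i,j|\bm{k})\neq 0$, $P(|\hat{\mu}_n|\le \gamma_n)\to 0$ since $|\hat{\mu}_n|$ concentrates around $|\mu|\ge m>2\gamma_n$.

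Third, a union bound over the at most $M$ triples gives
\[
P(\hat{G}_{skel,n}\neq G_{skel}) \le P(\mathcal{E}_n^{c}) \le \sum_{(i,j,\bm{k})} P\!\left(\mathbf{1}\{|\hat{\mu}_n|>\gamma_n\}\neq \mathbf{1}\{\mu\neq 0\}\right) \le M\cdot o(1),
\]
and this tends to $0$, which is the desired conclusion.

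The only delicate point is the coupling of the threshold $\gamma_n$ with the rate of concentration of $\hat{\mu}_n$ about $\mu$; in the fully abstract setting of this theorem, convergence in probability plus the finiteness of the nonzero $\mu$ values suffices to pick a workable $\gamma_n$, so the argument is essentially a bookkeeping lemma built on top of the population correctness of PC. The \emph{genuine} difficulty lies downstream, in establishing that the two concrete statistics introduced in Sections \ref{sec:samplepc} and \ref{sec:samplepch} are indeed consistent under $\rho$-mixing; Theorem \ref{thm:skelconsistency} is the clean packaging that turns those forthcoming consistency statements into consistency of the full PC skeleton recovery.
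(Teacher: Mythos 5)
Your proposal is correct and follows essentially the same route as the paper: a union bound over the finitely many triples $(i,j,\bm{k})$, with Type I error controlled by consistency of $\hat{\mu}_n$ and Type II error controlled by taking the threshold below half the minimum nonzero $|\mu(i,j|\bm{k})|$. The only cosmetic difference is that you let $\gamma_n\to 0$ while the paper fixes $\gamma=c/2$; both work, and your explicit remarks about the adaptive query ordering and the population correctness of PC are points the paper leaves implicit.
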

\begin{proof}
An error occurs in the sample PC algorithm if there is a pair of nodes $i,j$ and a conditioning set $\bm{k}\in K_{i,j}$ where an error event $E_{i,j\vert \bm{k}}$ occurs, where $E_{i,j\vert \bm{k}}$ denotes that ``an error occurred when testing partial correlation for zero at nodes $i,j$ with conditioning set $\bm{k}$", i.e.,

\[
E_{i,j\vert \bm{k}} = E_{i,j\vert \bm{k}}^{I} \cup  E_{i,j\vert \bm{k}}^{II},
\]

where
\[
E_{i,j\vert \bm{k}}^{I} := \{|\hat{\mu}_n(i,j|\bm k)| > \gamma \text{ and } \mu(i,j|\bm k) = 0\}\]
\[E_{i,j\vert \bm{k}}^{II} := \{ |\hat{\mu}_n(i,j|\bm k)| \le \gamma \text{ and } \mu(i,j|\bm k) \neq 0\}
\] denote the events of Type I error and Type II error respectively. Thus,

\begin{align*}
P(\text{an error occurs in the PC algorithm})&\leq P\left(\bigcup_{i,j,\bm{k}\in K_{i,j}}E_{i,j\vert \bm{k}}\right)\\
& \leq O(1)\sup_{i,j,\bm{k}\in K_{i,j}}P(E_{i,j\vert \bm{k}}) \numberthis\label{eqn:skelerror}\\    
\end{align*}
using that the cardinality of the set $\vert \{i,j,\bm{k}\in K_{i,j}\} \vert = 2^{p-2}p^2$. Then, for any $\gamma > 0$, we have:

\begin{align*}
    \sup_{i,j,\bm{k}\in K_{i,j}} P(E_{i,j\vert \bm{k}}^{I}) &= \sup_{i,j,\bm{k}\in K_{i,j}} P(\vert \hat{\mu}_n(i,j|\bm k) - \mu(i,j|\bm k)\vert > \gamma)
\end{align*}
It now follows from the consistency of $\hat{\mu}_n$ that
\[\sup_{i,j,\bm{k}\in K_{i,j}} P(E_{i,j\vert \bm{k}}^{I}) \rightarrow 0\numberthis\label{eqn:skeltype1err}\]

Next, we control the type II error probability. Towards this,  let $c=\inf \{\vert \mu(i,j|\bm k) \vert : \mu(i,j|\bm k)\neq 0\}>0$, and choose $\gamma = c/2$. Then,
	
\begin{align*}
    \sup_{i,j,\bm{k}\in K_{i,j}} P(E_{i,j\vert \bm{k}}^{II}) &= \sup_{i,j,\bm{k}\in K_{i,j}} P(\vert \hat{\mu}_n(i,j|\bm k)\vert \leq \gamma ~, ~\mu(i,j|\bm k)\neq 0)\\
&\leq \sup_{i,j,\bm{k}\in K_{i,j}} P(\vert \hat{\mu}_n(i,j|\bm k) - \mu(i,j|\bm k) \vert > c/2))
\end{align*}
Since the right hand side $\rightarrow 0$ as $n\rightarrow \infty$, it follows that
\[\sup_{i,j,\bm{k}\in K_{i,j}} P(E_{i,j\vert \bm{k}}^{II}) \rightarrow0\numberthis\label{eqn:skeltype2err}\]
Now by Eqns (\ref{eqn:skelerror})-(\ref{eqn:skeltype2err}) we get
\[
P(\text{an error occurs in the PC algorithm})\rightarrow 0
\]
This completes the proof.
\end{proof}

Since all inference is done while finding the skeleton, if this part is completed perfectly, that is, if there was no error while testing for conditional dependence, then $G_{CPDAG}$ will be estimated without error (See \citep{meek1995causal}). Therefore, we easily obtain: 
\begin{theorem}\label{thm:cpdagconsistency}
Denote by $\hat{G}_{CPDAG,n}$ the estimate from the entire PC algorithm and by $G_{CPDAG}$ the true $CPDAG$ from the DAG $G$. Then, 

\[
P(\hat{G}_{CPDAG,n}=G_{CPDAG})\rightarrow 1\quad \text{as}~ n\rightarrow \infty.
\]
\end{theorem}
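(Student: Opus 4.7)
The plan is to obtain Theorem \ref{thm:cpdagconsistency} as a direct corollary of Theorem \ref{thm:skelconsistency}, using the fact that Algorithm \ref{alg:pcorient} is a deterministic function of its inputs. The key observation is that the orientation step never queries the data: it takes as input only the skeleton $\hat{G}_{skel,n}$ and the separation sets $S(i,j)$ recorded during the skeleton-recovery phase of Algorithm \ref{alg:pcpop}, and applies a fixed sequence of rules (v-structure identification followed by Meek's orientation rules \citep{meek1995causal}). In particular, if $\hat{G}_{skel,n} = G_{skel}$ and, for every non-adjacent pair $(i,j)$, the stored $S(i,j)$ genuinely d-separates $i$ and $j$ in $G$, then Algorithm \ref{alg:pcorient} outputs exactly $G_{CPDAG}$.

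First, I would introduce the event $A_n$ that no Type I or Type II error occurred in any of the conditional dependence tests invoked during Algorithm \ref{alg:pcpop}, i.e., the event that for every triple $(i,j,\bm{k})$ tested, the test based on $\hat{\mu}_n(i,j|\bm{k})$ correctly decides whether $\mu(i,j|\bm{k})=0$. On $A_n$, two things happen simultaneously: (i) exactly those edges $i$--$j$ for which some conditioning set renders $X_i$ and $X_j$ conditionally independent are removed, so by faithfulness $\hat{G}_{skel,n}=G_{skel}$; and (ii) each stored $S(i,j)$ is a set for which the test correctly accepted conditional independence, and hence, again by faithfulness, d-separates $i$ and $j$ in $G$. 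Thus on $A_n$, both prerequisites for Algorithm \ref{alg:pcorient} to return $G_{CPDAG}$ are met.

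Second, I would simply write
\[P(\hat{G}_{CPDAG,n} \neq G_{CPDAG}) \le P(A_n^c) = P(\text{an error occurs in the PC algorithm}),\]
and invoke the bound derived in the proof of Theorem \ref{thm:skelconsistency}, which via a union bound over the $O(2^{p-2}p^2)$ triples $(i,j,\bm{k})$ and the consistency of $\hat{\mu}_n$ shows that $P(A_n^c) \to 0$ as $n \to \infty$. Combining these gives the claim.

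There is no substantive obstacle here; the argument is essentially a bookkeeping corollary of the earlier theorem. The one subtle point worth emphasizing is that correctness of the estimated skeleton alone would not be enough to guarantee correctness of the CPDAG, since v-structure orientation depends on whether the middle node lies in the separation set of its non-adjacent neighbors. However, the stronger event $A_n$ used in the proof of Theorem \ref{thm:skelconsistency} already controls both the skeleton and the separation sets simultaneously, so no extra work beyond invoking that event is required.
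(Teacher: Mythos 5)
Your proposal is correct and follows essentially the same route as the paper: the paper's own proof is a one-line observation that all inference happens in the skeleton phase, so on the event that no conditional-independence test errs (your $A_n$), the deterministic orientation rules of Algorithm \ref{alg:pcorient} return the true CPDAG by \citep{meek1995causal}, and $P(A_n^c)\to 0$ by the union bound from Theorem \ref{thm:skelconsistency}. Your explicit remark that the separation sets, not just the skeleton, must be correct for the v-structure orientation is a worthwhile clarification that the paper leaves implicit.
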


\subsection{Consistent Tests for Conditional Dependence}
In the Gaussian regime (Section \ref{sec:samplepc}), one can take $\mu(i,j|\bm k) = z(i,j|\bm k)$ and $\hat{\mu}_n(i,j|\bm k) := Z(i,j|\bm k)$, and in the non-Gaussian regime (Section \ref{sec:samplepch}), one can take $\mu(i,j|\bm k) = H(X_{1i},X_{1j}|X_{1r} :r\in \bm k)$ and $\hat{\mu}_n(i,j|\bm k) := H_n(X_{1i},X_{1j}|X_{1r} :r\in \bm k)$, where $\bm{X}_t = (X_{t1},X_{t2},\ldots, X_{tp})$. All that we need to show in order to establish consistency of the PC algorithm in these two settings, is the consistency of $\hat{\mu}_n(i,j|\bm k)$ as an estimator of $\mu(i,j|\bm k)$. With this in mind, we will delineate the conditions on the time series under which the consistency of the sample versions of the conditional dependence tests holds. We need a few notations first.

Let $\{\bm{X}_t\}_{t=1}^\infty$ be a strictly stationary $\mathbb{R}^p$-valued time series, where $\bm{X}_t\sim P$ for all $t\ge 1$. Let $\bm{Y}_t=\bm{X}_t-\e\bm{X}_t$ and denote by $\rho_{ij}$ the population (Pearson's) correlation between $X_{1i}$ and $X_{1j}$  (equivalently between $Y_{1i}$ and $Y_{1j}$), and by $\hat{\rho}_{n;i,j}$ the sample correlation between $\bm{X}^{(i)}=(X_{1i},X_{2i},\ldots,X_{ni})$ and $\bm{X}^{(j)}$ (equivalently between $\bm{Y}^{(i)}$ and $\bm{Y}^{(j)}$) for $n$ samples.

We now introduce the concept of $\rho$-mixing. For fixed $i,j \in 1,\ldots, p$, let $\mathcal{F}_{a}^{b}$ be the $\sigma$-field of events generated by the random variables $\{X_{ti},X_{tj}:a\leq t \leq b\}$, and $L_2(\mathcal{F}_a^b)$ be the collection of all second-order random variables which are $\mathcal{F}_a^b$-measurable. The stationary process $\{X_{ti},X_{tj}:t= 1,2,\ldots\}$ is called $\rho$-mixing \citep{kolmogorov1960strong,bradley2005basic} if the maximal correlation coefficient:

\begin{equation}\label{eq:rhomixing}
\xi_{ij}(k):=\sup_{l\geq 1}\sup_{\substack{U\in L_2(\mathcal{F}_{1}^l)\\ V\in L_2(\mathcal{F}_{l+k}^{\infty})}} \frac{\vert \text{cov}(U,V) \vert}{\text{var}^{1/2}(U)\text{var}^{1/2}(V)} \rightarrow 0 \text{ as } k\rightarrow \infty.
\end{equation}

Let us assume the following conditions.
\begin{enumerate}
    \item[(A.1)] $\{X_{ti},X_{tj}: t=1,2,\ldots\}$ is $\rho$-mixing for all $i,j$, with maximal correlation coefficients $\xi_{ij}(k), k\ge 1$.
     \item[(A.2)] $\e X_{ti}^4 <\infty$ for all $t,i$ and $\sum_{k=1}^{\infty} \xi_{ij} (k) < \infty$ for $i,j\in 1,\dots,p$.
    \item[(A.3)] There exists a sequence of positive integers $s_n\rightarrow \infty$ and $s_n = o(n^{1/2})$ such that $n^{1/2}\xi_{ij}(s_n)\rightarrow 0$ as $n\rightarrow\infty$ for $i,j\in 1,\ldots,p$.
    \item[(A.4)] $P$ is faithful to a DAG $G$.
\end{enumerate}

\subsubsection{The Gaussian Regime}
The following lemma shows that under assumptions A.1-A.3, the sample correlation between every pair of variables converges to the corresponding population correlation.

\begin{lemma}\label{lem:rhoconsistent}
Under A.1-A.3, 
\[
\hat{\rho}_{n;i,j} \text{ converges to } \rho_{i,j} \text{ in probability.}
\]

\end{lemma}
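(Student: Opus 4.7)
The plan is to decompose the sample correlation into separable averages whose limits can be obtained from an $L^2$ law of large numbers for $\rho$-mixing stationary sequences, and then combine them via Slutsky's theorem. Writing
$$\hat{\rho}_{n;i,j} = \frac{\frac{1}{n}\sum_t X_{ti}X_{tj} - \bar{X}_i\bar{X}_j}{\sqrt{\bigl(\frac{1}{n}\sum_t X_{ti}^2 - \bar{X}_i^2\bigr)\bigl(\frac{1}{n}\sum_t X_{tj}^2 - \bar{X}_j^2\bigr)}},$$
it suffices to prove that $\bar{X}_i \to \e X_{1i}$, that $\frac{1}{n}\sum_t X_{ti}^2 \to \e X_{1i}^2$, and that $\frac{1}{n}\sum_t X_{ti}X_{tj} \to \e[X_{1i}X_{1j}]$, all in probability (and the analogues with $j$ in place of $i$). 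Assuming nondegenerate marginals (else $\rho_{i,j}$ is undefined), continuity of the ratio then yields $\hat{\rho}_{n;i,j} \to \rho_{i,j}$ by the continuous mapping theorem.

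For each of the three averages, I would establish convergence in $L^2$ via a variance bound. Letting $Z_t$ stand generically for $X_{ti}$, $X_{ti}^2$, or $X_{ti}X_{tj}$, stationarity gives
$$\mathrm{Var}\Bigl(\tfrac{1}{n}\sum_{t=1}^n Z_t\Bigr) = \frac{\mathrm{Var}(Z_1)}{n} + \frac{2}{n^2}\sum_{1\le s<t\le n}\mathrm{Cov}(Z_s,Z_t).$$
Each $Z_s$ is measurable with respect to $\mathcal{F}_s^s$ (the joint $\sigma$-field in the case $Z_s = X_{si}X_{sj}$), so the defining inequality \eqref{eq:rhomixing} of the maximal correlation coefficient gives $|\mathrm{Cov}(Z_s,Z_t)| \le \xi_{ij}(t-s)\,\mathrm{Var}(Z_1)$, with $\mathrm{Var}(Z_1) < \infty$ guaranteed by A.2 (invoking Cauchy--Schwarz, $\e[(X_{1i}X_{1j})^2] \le \sqrt{\e X_{1i}^4 \cdot \e X_{1j}^4}$, for the product case). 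Summing the bound yields
$$\mathrm{Var}\Bigl(\tfrac{1}{n}\sum_{t=1}^n Z_t\Bigr) \le \frac{\mathrm{Var}(Z_1)}{n}\Bigl(1 + 2\sum_{k=1}^{n-1}\xi_{ij}(k)\Bigr),$$
which vanishes as $n \to \infty$ because $\sum_k \xi_{ij}(k) < \infty$ by A.2. Chebyshev's inequality then delivers convergence in probability, and Slutsky finishes the argument.

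The only conceptually nontrivial point is justifying that $\rho$-mixing transfers from the bivariate process $\{(X_{ti}, X_{tj})\}$ to its nonlinear single-time functionals $X_{ti}^2$ and $X_{ti}X_{tj}$. This is in fact immediate: any Borel function of $(X_{ti}, X_{tj})$ is $\mathcal{F}_t^t$-measurable, so the supremum in the definition of $\xi_{ij}(k)$ already dominates the maximal correlation coefficient of the transformed sequence. It is worth noting that assumption A.3 is not actually invoked in the proof of this lemma; A.3 will be needed only downstream, when a central limit theorem (rather than a plain law of large numbers) is required to control the Fisher $z$-transform statistic.
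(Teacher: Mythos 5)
Your proof is correct, but it takes a genuinely different route from the paper's. The paper disposes of this lemma in three lines: it notes that maximal correlation coefficients are invariant under centering and scaling, so A.1--A.3 transfer to the centered series $\bm Y_t$, and then cites Theorem~3 of \citep{masry2011estimation} for consistency of the sample correlation of a $\rho$-mixing stationary bivariate process. You instead give a self-contained second-moment argument: the bound $|\mathrm{Cov}(Z_s,Z_t)|\le \xi_{ij}(t-s)\,\mathrm{Var}(Z_1)$ for the single-time functionals $Z_t\in\{X_{ti},\,X_{ti}^2,\,X_{ti}X_{tj}\}$, summability of $\xi_{ij}$ from A.2, Chebyshev, and the continuous mapping theorem. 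The argument is sound: the measurability point you flag (that these nonlinear functionals are already dominated by the supremum defining $\xi_{ij}(k)$, taking $l=s$ there) is exactly right, and A.2's fourth-moment condition is precisely what makes them square-integrable. What your approach buys is transparency about hypotheses: it makes visible that neither A.3 nor Gaussianity is needed for this lemma --- the paper's proof invokes ``the Gaussianity assumption'' even though it is not among the lemma's stated hypotheses and is redundant given A.2's explicit fourth-moment bound. What the paper's route buys is the stronger conclusion $\hat{\rho}_{n;i,j}=\rho_{i,j}+O_P(n^{-1/2})$, which it relies on later in Lemma~\ref{lem:rhoconsistenth} and for which A.3 is the operative hypothesis; your plain law of large numbers does not deliver that rate. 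One small correction to your closing remark: in this paper A.3 is not consumed by a central limit theorem for the Fisher $z$-statistic (Lemma~\ref{cor4p1} needs only continuity of $g$); it enters through the $O_P(n^{-1/2})$ rate imported from \citep{masry2011estimation} in the Hilbert--Schmidt argument.
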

\begin{proof}
Note that $\bm Y_1,\ldots,\bm Y_n$ are centered random variables, and under A.1-A.3 together with the Gaussianity assumption on the data, each of their entries have finite fourth moment. Furthermore, the maximal correlation coefficients are invariant to centering and scaling of the random variables. Therefore, assumptions A.1-A.3 hold for $\bm{Y}_1,\ldots,\bm{Y}_n$ which are also Gaussian. Hence, by Theorem 3 in \citep{masry2011estimation}, the lemma follows.
\end{proof}


Denote the population partial correlation between $X_{1i}$ and $X_{1j}$ $\vert \{X_{1r}:r\in \bm{k}\}$ for some $\bm{k}\subset \{1,\ldots,p\}\setminus \{i,j\}$ by $\rho_{i,j\vert \bm{k}}$, and let $k:=\vert \bm k \vert$. Similarly denote by $\hat{\rho}_{n;i,j|\bm{k}}$, the sample Partial Correlation between $\bm{X}^{(i)}$ and $\bm{X}^{(j)}$ $\vert \{\bm{X}^{(r)}:r\in \bm{k}\}$ which is also the partial correlation between $\bm{Y}^{(i)}$ and $\bm{Y}^{(j)}$ $\vert \{\bm{Y}^{(r)}:r\in \bm{k}\}$, for $\bm{k}\subset \{1,\ldots,p\setminus\{i,j\}\}$. 

We will now show that Lemma \ref{lem:rhoconsistent} can in fact be used to prove convergence of the pairwise sample partial correlations to the corresponding population correlations.



\begin{lemma}\label{lem:parcorrconsistent}
Assume (A.1)-(A.3). Then,

\[
\hat{\rho}_{n;i,j\vert \bm{k}} \text{ converges to } \rho_{i,j\vert \bm{k}} \text{ in probability.}
\]
\end{lemma}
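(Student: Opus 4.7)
The plan is to reduce the convergence of sample partial correlations to the already-established convergence of sample pairwise correlations (Lemma \ref{lem:rhoconsistent}) via the continuous mapping theorem. The key observation is that $\rho_{i,j|\bm k}$ is a rational function of the entries of the correlation submatrix indexed by $\mathcal{I} = \{i,j\} \cup \bm k$, as exhibited by the Schur-complement identity recalled in Section \ref{sec:samplepc}.

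Since partial correlations are invariant under affine rescaling of individual coordinates, I would work throughout with correlation matrices rather than covariance matrices. Writing $R$ for the population correlation matrix of $\bm X_1$ and $\hat{R}_n$ for its sample version based on $\bm X_1,\ldots,\bm X_n$, the Schur-complement formula yields $\rho_{i,j|\bm k} = \varphi(R_{\mathcal{I}})$ and $\hat{\rho}_{n;i,j|\bm k} = \varphi(\hat{R}_{n,\mathcal{I}})$ for one and the same rational function $\varphi$ acting on the finitely many entries of the submatrix indexed by $\mathcal{I}$.

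Next I would verify that $\varphi$ is continuous at the true matrix $R_{\mathcal{I}}$. The only places where $\varphi$ can fail to be continuous are where $\det(R_{\bm k \bm k}) = 0$ (through the inverse appearing in the Schur complement $R_{11.2}$) or where a diagonal entry of $R_{\{i,j\}.\bm k}$ vanishes. Under the standing assumption that $\bm X_1$ has a non-degenerate Gaussian distribution, the full covariance is positive definite, hence every principal submatrix, including $R_{\bm k \bm k}$, is positive definite, and the Schur complement is positive definite as well; thus $\varphi$ is continuous in a neighborhood of $R_{\mathcal{I}}$.

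Finally, Lemma \ref{lem:rhoconsistent} applied to each coordinate pair in $\mathcal{I}$ gives the entrywise convergence $\hat{R}_{n,\mathcal{I}} \xrightarrow{P} R_{\mathcal{I}}$, and the continuous mapping theorem then yields $\hat{\rho}_{n;i,j|\bm k} = \varphi(\hat{R}_{n,\mathcal{I}}) \xrightarrow{P} \varphi(R_{\mathcal{I}}) = \rho_{i,j|\bm k}$. The step I expect to demand the most care is confirming invertibility of the conditioning block $R_{\bm k \bm k}$; this is immediate for a non-degenerate multivariate Gaussian distribution, but a rigorous write-up should either record non-degeneracy as an explicit standing hypothesis or note that it is implicit in the faithfulness assumption (A.4), since a deterministic linear dependence among a subset of coordinates would produce conditional independences incompatible with any faithful DAG $G$.
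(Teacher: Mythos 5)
Your proposal is correct and follows essentially the same route as the paper: both express the partial correlation as a fixed continuous (Schur-complement) function of a matrix of second moments, invoke Lemma \ref{lem:rhoconsistent} for entrywise convergence, and conclude by the continuous mapping theorem. The only differences are minor refinements on your part --- you work with the correlation matrix rather than the covariance matrix, which spares you the paper's appeal to Corollary 1 of Masry for consistency of the sample variances $\hat{\sigma}_i$, and you explicitly justify continuity of the map at the true parameter via positive definiteness of $R_{\bm k\bm k}$ and of the Schur complement, where the paper simply asserts that the function is ``clearly continuous'' on the non-singular matrices.
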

\begin{proof}
Without loss of generality let $\{X_{1r}:r\in \bm{k}\}$ be the last $k$ entries of $\bm X_1$. We will define a function $f$ on the set of all non-singular $M\in \mathbb{R}^{p\times p}$. Let $M = \left(\begin{array}{cc}
    M_{11} & M_{12} \\
    M_{21} & M_{22}
\end{array}\right)$ where $M_{11}$ is of dimension $(p-k)\times (p-k)$ and $M_{22}$ is of dimension $k \times k$, $M_{11.2}=M_{11}-M_{12}M_{22}^{-1}M_{21}$ and $e_1,\ldots,e_p$ denote the canonical basis vectors of $\mathbb{R}^p$. Define $f(M) = \frac{e_{i}^{\top} M_{11.2} e_{j}}{\sqrt{(e_{i}^{\top} M_{11.2} e_{i})(e_{j}^{\top} M_{11.2} e_{j})}}$. Clearly $f$ is a continuous function. 
Let $\hat{\Sigma}$ denote the sample covariance matrix of $\bm{X}_1,\ldots, \bm{X}_n$. 
Therefore as seen in Section 5.3 of \citep{muirhead2009aspects},
\[
\rho_{i,j\vert \bm{k}} = f(\Sigma)
\]
and, 
\[
\hat{\rho}_{i,j\vert \bm{k}} = f(\hat{\Sigma})
\]
Also note that $\Sigma = ((\rho_{ij}\sigma_i\sigma_j))$ and $\hat{\Sigma} = ((\hat{\rho}_{ij}\hat{\sigma}_i\hat{\sigma}_j))$ where $\rho_{ij}$ is the population correlation of $X_{1i}$ and $X_{1j}$ and $\hat{\rho}_{ij}$ is the sample correlation of $\bm{X}^{(i)}$ and $\bm{X}^{(j)}$, $\sigma^2_i$ is the population variance of $X_{1i}$, and $\hat{\sigma}^2_i$ sample variance of $\bm{X}^{(i)}$, where $\bm{X}^{(i)}=(X_{1i},X_{2i},\ldots,X_{ni})$. Now, from Corollary 1 in \citep{masry2011estimation}, it follows that $\hat{\sigma_i}\xrightarrow{P}\sigma_i$ for all $i$ under (A.0)-(A.3). Furthermore, $\hat{\rho}_{ij}\xrightarrow{P}\rho_{ij}$ according to Lemma \ref{lem:rhoconsistent} under (A.0)-(A.3). These together imply that $\hat{\Sigma}\xrightarrow{P}\Sigma$. Since, $f$ is a continuous function, we also have $\hat{\rho}_{i,j\vert \bm{k}} = f(\hat{\Sigma})\xrightarrow{P}f(\Sigma) =\rho_{i,j\vert \bm{k}}$, which completes the proof.
\end{proof}
The PC algorithm tests partial correlations after the Z-transform $g(\rho) = 0.5 \log((1+\rho)/(1-\rho))$ (see Section \ref{sec:samplepc} for details). Denote $Z_{n;i,j\vert \bm{k}} = g(\hat{\rho}_{n;i,j\vert \bm{k}})$ and $z_{i,j\vert \bm{k}}=g(\rho_{i,j\vert\bm{k}})$.
\begin{lemma}\label{cor4p1}
Under Assumptions (A.1)-(A.3), we have, for all $i,j, \bm k \subseteq K_{i,j}$:
\[
Z_{n;i,j\vert \bm{k}} \text{ converges to } z_{i,j\vert \bm{k}} \text{ in probability.}
\]
\end{lemma}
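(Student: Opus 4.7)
The proof will be a short application of the continuous mapping theorem, leveraging Lemma \ref{lem:parcorrconsistent} which already gives $\hat{\rho}_{n;i,j\vert \bm{k}} \xrightarrow{P} \rho_{i,j\vert \bm{k}}$ under (A.1)--(A.3). The plan is to note that Fisher's z-transform $g(\rho) = \tfrac{1}{2}\log\left(\tfrac{1+\rho}{1-\rho}\right)$ is continuous on the open interval $(-1,1)$, and then invoke the continuous mapping theorem for convergence in probability.

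First I would record that, by Lemma \ref{lem:parcorrconsistent}, $\hat{\rho}_{n;i,j\vert \bm{k}} \xrightarrow{P} \rho_{i,j\vert \bm{k}}$ for every $i,j$ and every conditioning set $\bm{k} \subseteq K_{i,j}$. Next I would verify that the transformation $g$ is applicable, i.e.\ that $\rho_{i,j\vert \bm{k}} \in (-1,1)$: this is ensured by the non-singularity of the covariance matrix $\Sigma$ (under the Gaussian/faithfulness set-up, the Schur complement $\Sigma_{11.2}$ is positive definite, so its correlations lie strictly between $-1$ and $1$). Since $g$ is a continuous function on $(-1,1)$, the continuous mapping theorem yields
\[
Z_{n;i,j\vert \bm{k}} = g(\hat{\rho}_{n;i,j\vert \bm{k}}) \xrightarrow{P} g(\rho_{i,j\vert \bm{k}}) = z_{i,j\vert \bm{k}}.
\]

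There is no real obstacle here: the heavy lifting has already been absorbed into Lemma \ref{lem:rhoconsistent} (which relies on the $\rho$-mixing CLT-type result of Masry) and Lemma \ref{lem:parcorrconsistent} (which propagates consistency through the continuous Schur-complement map). The only mild care needed is to argue that $\rho_{i,j\vert\bm{k}}$ stays in the open interval $(-1,1)$ so that $g$ is continuous in a neighbourhood of the limit; this is immediate from the assumption that the joint distribution $P$ has a non-degenerate Gaussian covariance. Hence the proof is essentially a one-line continuous-mapping argument stacked on top of the earlier lemma.
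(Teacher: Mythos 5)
Your proposal matches the paper's proof exactly: both deduce the result from Lemma \ref{lem:parcorrconsistent} via the continuity of $g$ and the continuous mapping theorem. Your added remark that $\rho_{i,j\vert\bm{k}}$ lies in $(-1,1)$ (so that $g$ is continuous at the limit) is a small but sensible refinement that the paper leaves implicit.
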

\begin{proof}
Lemma \ref{cor4p1} follows trivially from Lemma \ref{lem:parcorrconsistent}, on observing that the function $g$ is continuous.
\end{proof}

Since $Z_{n;i,j|\bm k}$ is a consistent sequence of estimators of $z_{i,j|\bm k}$, consistency of the PC algorithm using $Z_{n;i,j|\bm k}$ as a statistic for testing conditional dependence in the Gaussian regime now follows from Theorems \ref{thm:skelconsistency} and \ref{thm:cpdagconsistency}.

\subsubsection{The Non-Gaussian Regime}

In this section, we will show that under assumptions A.1 - A.3, $H_n(X_{1i},X_{1j}|X_{1r}~:r\in \bm k)$ is a consistent estimator of $H(X_{1i},X_{1j}|X_{1r}~:r\in \bm k)$. For notational convenience, let us abbreviate $H(X_{1i},X_{1j}|X_{1r}~:r\in \bm k)$ by $H_{i,j|\bm k}$ and $H_n(X_{1i},X_{1j}|X_{1r}~:r\in \bm k)$ by $H_{n;i,j|\bm k}$.

\begin{lemma}\label{lem:rhoconsistenth}
Under A.1-A.3, if the regularization constant $\epsilon_n$ satisfies $n^{-1/3} \ll \epsilon_n \ll 1$, then 
\[
H_{n;i,j\vert \bm{k}} \text{ converges to } H_{i,j \vert \bm{k}} \text{ in probability.}
\]
\end{lemma}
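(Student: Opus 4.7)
The plan is to mirror the proof of Theorem 5 in \citep{fukumizu2007kernel}, replacing its i.i.d.\ law of large numbers for empirical covariance operators with a corresponding law under $\rho$-mixing. Observe first that $H_{n;i,j|\bm k}$ is built from the regularized empirical covariance operators $R_{\overset{..}{X}}, R_{\overset{..}{Y}}, R_{\bm Z}$, which in turn are continuous functionals of the (un-normalized) empirical covariance operators $\widehat{\Sigma}^{(n)}_{UV}$ on the reproducing kernel Hilbert spaces associated to $\overset{..}{X}, \overset{..}{Y}, \bm Z$, together with the regularization constant $\epsilon_n$. The population quantity $H_{i,j|\bm k}=\|V_{\overset{..}{X}\overset{..}{Y}|\bm Z}\|^2$ is the same functional applied to the population covariance operators $\Sigma_{UV}$. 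So it suffices to (i) establish in-probability convergence of $\widehat{\Sigma}^{(n)}_{UV}$ to $\Sigma_{UV}$ in Hilbert--Schmidt norm for each relevant pair $(U,V)$, and (ii) show that the regularization schedule $n^{-1/3}\ll \epsilon_n\ll 1$ is compatible with the operator-theoretic continuity argument of \citep{fukumizu2007kernel}.

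For step (i), I would use the key property that maximal correlation coefficients are non-increasing under measurable transformations of the underlying random variables: applying this to the RKHS feature maps $k(\cdot,X_{ti})$ (and their tensor products for $\bm Z$) shows that the Hilbert-space-valued stationary sequence of feature vectors inherits the $\rho$-mixing property of $\{X_{ti},X_{tj}\}$ with mixing coefficients bounded by $\xi_{ij}(k)$. Under A.1--A.2 (summable mixing coefficients), together with the integrability/boundedness of the kernel $k$, one can then invoke an $L^2$ law of large numbers for Hilbert-space-valued $\rho$-mixing sequences, obtained either directly using Markov's inequality plus a variance bound $\mathrm{Var}(\frac1n\sum_t Z_t)=O(\frac1n \sum_k \xi(k))$, or componentwise in an orthonormal basis. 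This gives $\|\widehat{\Sigma}^{(n)}_{UV}-\Sigma_{UV}\|_{HS}\xrightarrow{P}0$ with an $O_P(n^{-1/2})$ rate.

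For step (ii), I would reuse the decomposition from \citep{fukumizu2007kernel}:
\[
H_{n;i,j|\bm k}-H_{i,j|\bm k}=\bigl[H_{n;i,j|\bm k}-H^{(\epsilon_n)}_{i,j|\bm k}\bigr]+\bigl[H^{(\epsilon_n)}_{i,j|\bm k}-H_{i,j|\bm k}\bigr],
\]
where $H^{(\epsilon_n)}_{i,j|\bm k}$ is the regularized population analogue. The second (bias) term vanishes because $\epsilon_n\to 0$ and the conditional cross-covariance operator is well-defined, by the very argument used in the i.i.d.\ case. For the first (stochastic) term, each operator inverse $(\widehat{\Sigma}+\epsilon_n I)^{-1}$ has norm at most $\epsilon_n^{-1}$, so using the resolvent identity, the discrepancy is bounded by a polynomial in $\epsilon_n^{-1}$ times $\|\widehat{\Sigma}^{(n)}_{UV}-\Sigma_{UV}\|_{HS}=O_P(n^{-1/2})$. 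The specific requirement $n^{-1/3}\ll \epsilon_n$ arises here: tracking the worst power of $\epsilon_n^{-1}$ appearing in the bound (at most $\epsilon_n^{-3}$ from the three regularized inverses occurring in the trace expression) and requiring this times $n^{-1/2}$ to vanish forces $\epsilon_n^3 n^{1/2}\to\infty$, i.e.\ $\epsilon_n\gg n^{-1/6}$; the more conservative rate $\epsilon_n\gg n^{-1/3}$ leaves comfortable slack.

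The main obstacle is the verification of the Hilbert--Schmidt-norm law of large numbers under $\rho$-mixing with just the fourth-moment assumption (A.2) and the transfer of mixing from the coordinates $X_{ti}$ to vector-valued feature maps of $(X_{ti},X_{tj},\{X_{tr}\}_{r\in\bm k})$. Once this is in place, steps (ii) and the reduction to \citep{fukumizu2007kernel} are essentially routine operator-theoretic continuity.
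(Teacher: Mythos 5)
Your proposal follows the same skeleton as the paper's proof: both defer to the argument for Theorem 5 of \citep{fukumizu2007kernel}, observe that the only sample-dependent ingredient is the convergence of the empirical (regularized) covariance operators, and replace the i.i.d.\ law of large numbers there by one valid under $\rho$-mixing. Where you genuinely diverge is in how that single step is justified. The paper disposes of it by citing Corollary 1 and Theorem 5 of \citep{masry2011estimation}, which give consistency of the scalar sample variances and the $O_P(n^{-1/2})$ rate for the sample correlations $\hat{\rho}_{n;i,j}$; but what the Fukumizu argument actually requires is Hilbert--Schmidt-norm convergence of the empirical cross-covariance operators built from the kernel feature maps, which is not an immediate consequence of convergence of scalar second moments of the raw coordinates. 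Your route --- transferring $\rho$-mixing to the feature-map sequence via the fact that the maximal correlation coefficient dominates the correlation of \emph{any} square-integrable functions of the past and future $\sigma$-fields, and then running a componentwise variance bound $\mathrm{Var}(\tfrac1n\sum_t Z_t)=O(\tfrac1n\sum_k\xi(k))$ for a bounded kernel --- supplies exactly the operator-level statement the reduction needs, and is in that respect more complete than what the paper writes down.

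One concrete slip: your accounting of the regularization schedule has the inequality backwards. Since $n^{-1/3}\ll n^{-1/6}$, the hypothesis $\epsilon_n\gg n^{-1/3}$ is \emph{weaker} than $\epsilon_n\gg n^{-1/6}$ (it admits, e.g., $\epsilon_n=n^{-1/4}$), so if the stochastic term really scaled like $\epsilon_n^{-3}\cdot O_P(n^{-1/2})$ the lemma's stated condition would be insufficient, not ``comfortably slack.'' The correct bookkeeping in \citep{fukumizu2007kernel} yields a factor of order $\epsilon_n^{-3/2}$ (each regularized inverse enters through a square root), so the requirement is $\epsilon_n^{-3/2}n^{-1/2}\to 0$, i.e.\ $n\epsilon_n^{3}\to\infty$, which is precisely $\epsilon_n\gg n^{-1/3}$ as assumed. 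With that exponent corrected, your argument goes through and matches the lemma's hypothesis exactly.
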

\begin{proof}
The proof of Lemma \ref{lem:rhoconsistenth} essentially follows from the proof of Theorem 5 in \citep{fukumizu2007kernel}, modulo the fact that the samples are no longer independent. To begin with, it follows from the proof of Theorem 5 in \citep{fukumizu2007kernel} that it suffices to establish equations (14) and (15) in the supplement of \citep{fukumizu2007kernel} in our time-series setting, in order to prove Lemma \ref{lem:rhoconsistenth}. Equation (15) in the supplement of \citep{fukumizu2007kernel} is purely a population version which does not depend on the samples, so will go through in our case, too. Hence, we only need to show the validity of Equation (14) in the supplement of \citep{fukumizu2007kernel} for our setting, in order to complete the proof. This, in turn, follows from Corollary 1 and Theorem 5 in \citep{masry2011estimation}
, which gives the following for all $i\ne j$:
\begin{enumerate}
    \item $\hat{\sigma}_i^2 := \frac{1}{n}\sum_{t=1}^n (X_{ti}-\overline{X}^{(i)})^2 \rightarrow \sigma_i^2 := \mathrm{Var}(X_{1i})$, where $\overline{X}^{(i)} := \frac{1}{n} \sum_{t=1}^n X_{ti}$.
    \item $\hat{\rho}_{n;i,j} = \rho_{i,j} + O_P(n^{-1/2})$.
\end{enumerate}
The proof of Lemma \ref{lem:rhoconsistenth} is now complete.
\end{proof}

Once again, since $H_{n;i,j|\bm k}$ is a consistent estimator of $H_{i,j|\bm k}$, consistency of the PC algorithm using $H_{n;i,j|\bm k}$ as a statistic for testing conditional dependence in the non-Gaussian regime now follows from Theorems \ref{thm:skelconsistency} and \ref{thm:cpdagconsistency}.

\section{Analogous Results for Strongly Mixing Processes}\label{strmix}

The class of $\rho$-mixing processes is contained in the class of so called strongly mixing processes, and it turns out that all our results can easily be adopted in this more general framework, too. Let us first introduce the concept of strong mixing. For fixed $i,j \in 1,\ldots, p$, let $\mathcal{F}_{a}^{b}$ be the $\sigma$-field of events generated by the random variables $\{X_{ti},X_{tj}:a\leq t \leq b\}$, and $L_2(\mathcal{F}_a^b)$ be the collection of all second-order random variables which are $\mathcal{F}_a^b$-measurable. The stationary process $\{X_{ti},X_{tj}:t= 1,2,\ldots\}$ is called strongly mixing \citep{kolmogorov1960strong,bradley2005basic} if:

\[
\alpha_{ij}(k):=\sup_{l\geq 1}\sup_{\substack{A\in L_2(\mathcal{F}_{1}^l)\\ B\in L_2(\mathcal{F}_{l+k}^{\infty})}} |P(A\cap B) - P(A) P(B)| \rightarrow 0 \text{ as } k\rightarrow \infty.
\]
In this case, $\alpha_{ij}(k)$ are called the strongly mixing coefficients. 

All our results will go through for a strongly mixing process too, under the following slightly different set of assumptions:

\begin{enumerate}
 \item[(A.1)*] $\{X_{ti},X_{tj}: t=1,2,\ldots\}$ is strongly mixing for all $i,j$, with coefficients $\alpha_{ij}(k), k\ge 1$.
    \item[(A.2)*] $\e |X_{ti}|^{2\delta} < \infty$ for some $\delta > 2$ and all $t,i$, and the strongly mixing coefficients satisfy: $\sum_{k=1}^{\infty} \alpha_{ij} (k)^{1-2/\delta} < \infty$ for $i,j\in 1,\dots,p$.
    \item[(A.3)*] There exists a sequence of positive integers $s_n\rightarrow \infty$ and $s_n = o(n^{1/2})$ such that $n^{1/2}\alpha_{ij}(s_n)\rightarrow 0$ as $n\rightarrow\infty$ for $i,j\in 1,\ldots,p$.
    \item[(A.4)*] $P$ is faithful to a DAG $G$.
\end{enumerate}

The proof follows directly from \citep{masry2011estimation}, in a manner exactly similar to the $\rho$-mixing case, so we skip it. 

\begin{remark}
This section shows that one can derive consistency of the PC algorithm for the more general class of strongly mixing processes, if one agrees to assume conditions (A.1)* - (A.4)*. The results under the $\rho$-mixing assumption are still relevant, because if a $\rho$-mixing time series has finite fourth moment but all higher moments are infinite, then (A.2)* will not hold. In that case, one must appeal to the assumptions (A.1) - (A.4) in order to conclude consistency of the PC algorithm.      
\end{remark}

\section{Common Time Series Models}
We will consider two classes of examples to demonstrate the consistency of the PC algorithm in two commonly used time series models, namely VARMA and Linear Processes. Throughout this section, we are going to assume that the time series $\bm X_t$ satisfies $\e|X_{ti}|^{2\delta}<\infty$ for some $\delta >2$.
\subsection{VARMA Process} 
Assume now that $\bm{X}_t$ is a stationary vector autoregressive moving average (VARMA) process with values in $\mathbb{R}^p$. Then it admits a Markovian representation (See \citep{pham1985some}),

\begin{equation}\label{varmaeq}
    \bm{X}_t = HZ_t, ~~ Z_t = FZ_{t-1} + G \bm \epsilon_t
\end{equation}
where $Z_t$ are random vectors, $H, F, G$ are appropriate matrices with all eigenvalues of $F$ being strictly less than $1$, and $\bm \epsilon_t$ are i.i.d. error random vectors with density $g$. 

\begin{theorem}\label{thm:armacpdagconsistent}

Let $\bm{X}_t$ be a stationary VARMA process which is faithful with respect to a DAG $G$. Also, suppose that the density $g$ satisfies $\int \|x\|^{\beta_1} g(x)~dx <\infty$ and $\int \vert g(x) - g(x-\theta) \vert dx = O(\Vert \theta\Vert^{\beta_2})$ for some $\beta_1,\beta_2>0$. 
Then, we have:

\[
P(\hat{G}_{CPDAG,n}=G_{CPDAG})\rightarrow 1\quad \text{as}~ n\rightarrow \infty
\]
In particular, the conclusion holds for a stationary Gaussian VARMA process. 
\end{theorem}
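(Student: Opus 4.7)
The plan is to reduce the theorem to the strongly mixing framework of Section~\ref{strmix} by verifying that the VARMA process satisfies conditions (A.1)*--(A.4)*. Given these, the conclusion is immediate from the starred analogue of Theorems~\ref{thm:skelconsistency} and~\ref{thm:cpdagconsistency}.

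First, I would exploit the Markovian representation~\eqref{varmaeq}: the latent process $Z_t = F Z_{t-1} + G\bm{\epsilon}_t$ is a vector AR(1) chain with i.i.d.\ innovations, and $\bm{X}_t$ is an instantaneous linear transform of $Z_t$. Since $F$ has spectral radius strictly less than $1$, and the innovation density $g$ satisfies the moment condition $\int \|x\|^{\beta_1} g(x)\,dx <\infty$ together with the $L^1$--Lipschitz condition $\int |g(x)-g(x-\theta)|\,dx = O(\|\theta\|^{\beta_2})$, the hypotheses of the geometric ergodicity theorem in \citep{pham1985some} are met. Consequently, $Z_t$ is geometrically ergodic and therefore geometrically strongly mixing: there exist constants $C>0$ and $\rho\in(0,1)$ such that the strong mixing coefficients of $\{Z_t\}$ decay as $C\rho^k$. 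Because $(X_{ti},X_{tj})$ is a fixed measurable function of $Z_t$, the $\sigma$-fields $\mathcal{F}_a^b$ generated by $\{X_{ti},X_{tj}:a\le t\le b\}$ are sub-$\sigma$-fields of those generated by $\{Z_t:a\le t\le b\}$, so $\alpha_{ij}(k) \le C\rho^k$ for every $i,j$. This gives (A.1)*.

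Next, I would verify (A.2)* and (A.3)* directly from the geometric decay. For any $\delta>2$,
\[
\sum_{k=1}^{\infty} \alpha_{ij}(k)^{1-2/\delta} \;\le\; C^{1-2/\delta}\sum_{k=1}^{\infty}\rho^{k(1-2/\delta)} \;<\;\infty,
\]
and the fourth-moment-plus hypothesis $\e|X_{ti}|^{2\delta}<\infty$ stated at the beginning of Section~5 supplies the remaining part of (A.2)*. For (A.3)*, choose $s_n = \lceil c\log n\rceil$ with any $c>\tfrac{1}{2\log(1/\rho)}$; then $s_n\to\infty$, $s_n=o(n^{1/2})$, and $n^{1/2}\alpha_{ij}(s_n)\le C n^{1/2-c\log(1/\rho)}\to 0$. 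Finally, (A.4)* is given by the assumed faithfulness of $\bm{X}_t$ to $G$. With all four conditions verified, Theorems~\ref{thm:skelconsistency}--\ref{thm:cpdagconsistency} in their starred form yield $P(\hat{G}_{CPDAG,n}=G_{CPDAG})\to 1$. For the Gaussian VARMA special case, $g$ is a nondegenerate multivariate Gaussian density, which has all moments finite and is globally $L^1$--Lipschitz in its mean (with $\beta_2=1$), so the density conditions are automatic.

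The main obstacle is the mixing step: turning the VARMA structure into a quantitative strong mixing rate. This is precisely what the two density conditions on $g$ are designed to buy, via a Foster--Lyapunov drift argument combined with a minorization derived from the $L^1$--Lipschitz smoothness of $g$ (as carried out in \citep{pham1985some}). Once the geometric mixing rate is in hand, the verification of (A.2)* and (A.3)* is essentially bookkeeping, and the consistency of the CPDAG estimator is then a direct appeal to the machinery already established in Sections~3 and~\ref{strmix}.
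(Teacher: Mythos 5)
Your proposal is correct and follows essentially the same route as the paper: both reduce the theorem to the strongly mixing framework of Section~\ref{strmix} by invoking the Markovian representation~\eqref{varmaeq} and the exponential strong-mixing result of \citep{pham1985some} (the paper packages this as Lemma~\ref{lemma:armarhomixing}), then note that exponential decay of $\alpha_{ij}(k)$ together with the Section~5 moment assumption gives (A.1)*--(A.3)*, with faithfulness supplying (A.4)* and the Gaussian density satisfying the hypotheses with $\beta_1=\beta_2=1$. Your explicit verification of the summability in (A.2)* and the choice $s_n=\lceil c\log n\rceil$ for (A.3)* simply fills in bookkeeping the paper leaves implicit.
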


\begin{proof} Note that Lemma \ref{lemma:armarhomixing} implies (A.1)* - (A.3)*. 
Hence, the consistency of the PC algorithm follows from the arguments presented in Section \ref{strmix}.  Theorem \ref{thm:armacpdagconsistent} now follows from Lemma \ref{lemma:armarhomixing}, on observing that the Gaussian density satisfies its hypothesis with $\beta_1=\beta_2 =1$.
\end{proof}

\subsection{Linear Process}

Assume now that $\bm{X}_t$ is a stationary linear process with values in $\mathbb{R}^p$. That is, there exist i.i.d. random vectors $\bm \epsilon_t$ having density $g$ (see \citep{pham1985some}), such that

\[
\bm{X}_t = \sum_{i=0}^{\infty}A_i\bm \epsilon_{t-i}, ~~ A_0 = I,
\]\label{stln}for matrices $A_i$, where $I$ is the identity matrix. We will assume that the density function $g$ of the errors $\bm \epsilon_t$ satisfies the condition $\int \vert g(x) - g(x-\theta)\vert ~ dx = O(\Vert \theta \Vert)$, which is proved in (\ref{eq:gaussnormtheta}) for a Gaussian density. 
Let the generating function of $A_i$ be denoted as $A(z)=\sum_{k=0}^{\infty}A_k z^k$. Let us also assume that $E\Vert \epsilon_i \Vert <\infty$. Then, the conditions of Theorem 2.1 in \citep{pham1985some} are satisfied. Denote $S_i=\sum_{j=i}^{\infty}\Vert A_j\Vert$, $\beta_{\lambda}(k)=\sum_{i=k}^{\infty}(S_i)^{\lambda/(1+\lambda)}$.

\begin{theorem}\label{thm:linearcpdagconsistent}
If $\bm{X}_t$ is a stationary linear process, which is faithful with respect to a DAG $G$, and  satisfies the following conditions:
\begin{enumerate}
    \item The density $g$ satisfies $\int \vert g(x) - g(x-\theta)\vert ~ dx = O(\Vert \theta \Vert)$,
    \item $A(z)\neq 0$ for $\vert z\vert\leq 1$,
    \item There exists $\lambda > 0$ such that $\sup_t \e\|\bm \epsilon_t\|^\lambda < \infty$,
    \item $\sum_{k=0}^{\infty} \beta_\lambda (k)^{1-2/\delta}<\infty$,
    \item $n^{1/2}\beta_\lambda(s_n)\rightarrow 0$ for some $s_n=o(n^{1/2})$.
\end{enumerate}

Then we have,

\[
P(\hat{G}_{CPDAG,n}=G_{CPDAG})\rightarrow 1\quad \text{as}~ n\rightarrow \infty
\]
\end{theorem}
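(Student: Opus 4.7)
The plan is to mirror the structure of the VARMA case in Theorem \ref{thm:armacpdagconsistent} and reduce the conclusion to an application of Theorem \ref{thm:cpdagconsistency} via the strongly mixing framework of Section \ref{strmix}. The heart of the argument is to verify that under the listed hypotheses, the linear process $\bm{X}_t$ satisfies assumptions (A.1)*--(A.4)*, with the strongly mixing coefficients $\alpha_{ij}(k)$ controlled by $\beta_\lambda(k)$.

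First, I would invoke Theorem 2.1 of \citep{pham1985some}, whose hypotheses are precisely conditions 1 (translation-smoothness of the innovation density $g$), 2 (non-vanishing of the generating function $A(z)$ on $\{|z|\le 1\}$), and 3 (the $\lambda$-moment on $\bm\epsilon_t$) of the present theorem. That result yields the existence of a constant $C$ such that the strongly mixing coefficients of the coordinate pair $(X_{ti},X_{tj})$ satisfy $\alpha_{ij}(k) \le C\beta_\lambda(k)$ for all $i,j$ and $k$.

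Second, I would verify (A.1)*--(A.4)*. Since condition 4 forces $\beta_\lambda(k)\to 0$, the bound above gives $\alpha_{ij}(k)\to 0$, establishing (A.1)*. The moment requirement in (A.2)* is provided by the blanket assumption $\e|X_{ti}|^{2\delta}<\infty$ stated at the beginning of the section, while the summability $\sum_k \alpha_{ij}(k)^{1-2/\delta}<\infty$ follows from condition 4 combined with the bound $\alpha_{ij}(k)\le C\beta_\lambda(k)$ (raising both sides to the power $1-2/\delta\in(0,1)$ and summing). Condition (A.3)* is immediate from condition 5, again via the same bound, with the sequence $s_n$ supplied by hypothesis. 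Finally, (A.4)* is the assumed faithfulness of $\bm X_t$ to $G$.

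Third, having verified (A.1)*--(A.4)*, I would conclude by citing the consistency framework of Section \ref{strmix}: Theorems \ref{thm:skelconsistency} and \ref{thm:cpdagconsistency}, adapted to the strongly mixing setting as described in that section, yield $P(\hat G_{CPDAG,n}=G_{CPDAG})\to 1$. The main conceptual step is the invocation of the Pham--Tran bound, whereas all the remaining bookkeeping is routine once this bound is in hand. The mildest subtlety is ensuring that $\beta_\lambda(k)\to 0$ under condition 4 and that the exponent $1-2/\delta$ is compatible between the $\delta$ in (A.2)* and the summability assumption in condition 4; both are automatic from the stated hypotheses, so I do not expect any real obstacle beyond careful citation.
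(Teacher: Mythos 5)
Your proposal matches the paper's own argument: the paper proves exactly this reduction in Lemma~\ref{lem:linearprocessmixing} (invoking Theorem~2.1 of Pham--Tran to get $\alpha_{ij}(k)\le C\beta_\lambda(k)$ and thence (A.2)* and (A.3)*) and then concludes via the strongly mixing framework of Section~\ref{strmix}. Your write-up is in fact somewhat more explicit than the paper's terse proof about how conditions 4 and 5 transfer to the $\alpha_{ij}$ coefficients, but the route is the same.
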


For proving Theorem \ref{thm:linearcpdagconsistent}, we need the following lemma:

\begin{lemma}\label{lem:linearprocessmixing}
Under the assumptions of Theorem \ref{thm:linearcpdagconsistent},  
$\{X_{ti},X_{tj},t\geq 1\}$ is strongly mixing with coefficient $\alpha_{ij}(k)\leq C\beta_\lambda(k)$, where $C$ is a constant. Also, $\bm X_t,t\geq 1$ satisfies (A.2)* and (A.3)*.
\end{lemma}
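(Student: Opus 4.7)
The plan is to invoke the Pham–Tran strong mixing theorem for stationary linear processes, and then transfer the mixing conclusion for the full vector process $\bm X_t$ to the coordinate pair $(X_{ti}, X_{tj})$ by a sub-$\sigma$-field argument. Specifically, Theorem 2.1 of \citep{pham1985some} states that, under hypotheses (1)–(3) of Theorem \ref{thm:linearcpdagconsistent} (density smoothness, invertibility of $A(z)$ on the closed unit disk, and finite $\lambda$-th moment of the innovations), the vector process $\{\bm X_t\}$ is strongly mixing with coefficient $\alpha(k) \le C\,\beta_\lambda(k)$ for some absolute constant $C$ and the $\beta_\lambda(\cdot)$ defined just before the theorem statement. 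This is the one external black-box fact I would rely on.

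Next, to pass from the full process to the pair, I would observe that the $\sigma$-field $\mathcal F_a^b$ in the lemma, generated by $\{X_{ti}, X_{tj}: a \le t \le b\}$, is contained in the $\sigma$-field $\widetilde{\mathcal F}_a^b$ generated by the entire vector sequence $\{\bm X_t : a \le t \le b\}$. Since the strong mixing coefficient $\alpha_{ij}(k)$ is defined as a supremum over events in these smaller $\sigma$-fields, while the ambient coefficient $\alpha(k)$ is the same supremum taken over larger $\sigma$-fields, one gets $\alpha_{ij}(k) \le \alpha(k) \le C\,\beta_\lambda(k)$ for every $k \ge 1$. This handles the first claim of the lemma.

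For the second claim, I would check (A.2)* and (A.3)* directly. The moment hypothesis $\e|X_{ti}|^{2\delta} < \infty$ is the blanket assumption at the top of Section 5, so only summability and rate need verification. For (A.2)*, using $\alpha_{ij}(k) \le C\,\beta_\lambda(k)$ and monotonicity of $t \mapsto t^{1-2/\delta}$ on $(0,\infty)$ (since $\delta > 2$), we get
\[
\sum_{k=1}^\infty \alpha_{ij}(k)^{1-2/\delta} \le C^{1-2/\delta}\sum_{k=1}^\infty \beta_\lambda(k)^{1-2/\delta} < \infty
\]
by assumption (4) of Theorem \ref{thm:linearcpdagconsistent}. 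For (A.3)*, the same bound combined with assumption (5) gives $n^{1/2}\alpha_{ij}(s_n) \le C\, n^{1/2}\beta_\lambda(s_n) \to 0$ with the same sequence $s_n = o(n^{1/2})$.

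The main obstacle, and really the only nontrivial step, is the application of Pham–Tran: one must confirm that hypotheses (1)–(3) as stated in Theorem \ref{thm:linearcpdagconsistent} exactly match the assumptions of that reference and that the resulting mixing rate is indeed expressed in terms of $\beta_\lambda(\cdot)$ with the same definition used here. Once that correspondence is in place, the rest of the lemma is essentially bookkeeping: a sub-$\sigma$-field monotonicity for mixing coefficients plus two elementary comparisons of series and sequences.
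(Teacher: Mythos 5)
Your proposal is correct and follows essentially the same route as the paper: both rest on Theorem 2.1 of Pham--Tran for strong mixing of the linear process (the paper passes through the intermediate bound $\alpha_{ij}(k)\le 4\|\Delta_k\|_1$ on the Gastwirth--Rubin coefficient, while you restrict to the pair via sub-$\sigma$-field monotonicity, a cosmetic difference). Your explicit verification of (A.2)* and (A.3)* via monotonicity of $t\mapsto t^{1-2/\delta}$ and hypotheses (4)--(5) is in fact spelled out more carefully than in the paper, which leaves that step implicit.
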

\begin{proof}
By Theorem 2.1 in \citep{pham1985some}, the process is strongly mixing. Note that $\alpha_{ij}(k)\leq 4 \|\Delta_k\|_1$  and hence, $\alpha_{ij}(k)\leq C\beta_{\lambda}(k)$ (by Theorem 2.1 in \citep{pham1985some}). This proves Lemma \ref{lem:linearprocessmixing}. 
\end{proof}

\begin{proof}[Proof of Theorem \ref{thm:linearcpdagconsistent}]\label{rem:linearrhomixing} Theorem \ref{thm:linearcpdagconsistent} follows from its assumptions, together with Lemma \ref{lem:linearprocessmixing} and Section \ref{strmix}.
\end{proof}

\section{The Time-Aware PC algorithm}
The Time-Aware PC algorithm is an adaptation of the PC algorithm for causal inference in a time series scenario by considering DAGs between variables across different times instead of at a fixed time. This is especially relevant in practice where past observations in a time series have a causal influence on future observations, that is, there exist across-time causal relationships. Let us denote a time series of variables $V=\{1,\ldots,p\}$ by $\bm{X}_t=(X_{t1},\ldots,X_{tp}), t\geq 1$. An object of general interest for causal inference in the time series scenario is a graph $\bm{G}_R$ with nodes as the variables $V=\{1,\ldots,p\}$ and edge $u\rightarrow v$ if $X_{t_1 u}\rightarrow X_{t_2 v}$ for some $t_1\leq t_2$, where $X_{t_1 u}\rightarrow X_{t_2 v}$ are defined by a causal model with respect to a graph $\bm{G}$ with nodes $X_{t u}$. Some common examples are in interventional, structural and Granger Causality \citep{eichler2013causal,moraffah2021causal}, and applications in neurosciences \citep{biswas2022statistical1, smith2011network}, and econometrics \citep{granger2001essays}. We call $\bm{G}_R$ as the \emph{Rolled Graph} of $\bm{G}$.

Let $\bm{X}_1,\ldots,\bm{X}_n\in \mathbb{R}^p$ be a strictly stationary time series and $\bm{X}_t=(X_{t1},\ldots,X_{tp})$. Let the graph $\bm{G}=(\bm{V},\bm{E})$ consist of the set of nodes $\bm{V} = \left\{(v,t): v \in \{1,\ldots,p\}, t\in \{1,\ldots, \tau\}\right\}$ for $\tau\geq 1$, and directed edges $\bm{E}\subseteq \mathbf{V}\times \mathbf{V}$. Assume that $\bm{G}$ is a DAG. In $\bm{G}$, an edge $(u,t_1)\rightarrow (v,t_2)$ represents a connection from variable $u$ at time $t_1$ to variable $v$ at time $t_2$.

Let $\bm{\chi}_t = (\bm{X}_{1+(t-1)r},\ldots,\bm{X}_{\tau+(t-1)r})\in \mathbb{R}^{p\tau}, r\geq 1, t=1,2,\ldots,N:=\lfloor \frac{n-\tau}{r} \rfloor +1$. It follows that $\{\bm{\chi}_t\}_{t=1}^{\infty}$ is also strictly stationary. For our setting, the components of $\bm{\chi}_t$ correspond to the nodes in $\bm{G}$: the $(p(t'-1)+v)$-th component of $\bm{\chi}_{t}$ correspond to node $(v,t')$ in $\bm{G}$. 

In the traditional PC algorithm, (A.5) is assumed, that is, $\bm{X}_t$ is faithful with respect to a DAG $G$, which models causal relations at a fixed $t$ by the DAG $G$. In practice, this assumption can be unreasonable in a time series scenario which typically includes relationships between variables over time such as, $X_{1u}\rightarrow X_{2v}$. To better accommodate inter-temporal causal relations in a time-series scenario, we assume faithfulness with respect to $\bm{G}$ which has edges from variable $u$ at time $t_1$ to variable $v$ at time $t_2$. That is, instead of (A.4), we assume (B.1) as follows.

\begin{center}
(B.1): $\bm{\chi}_t$ is faithful with respect to such a DAG $\bm{G}$ as above. 
\end{center}

Moreover, if one is interested in using partial correlations as the tests for dependence, then the following condition on the time-series may be useful:

\begin{center}
(B.0): $\{\bm{X}_t\}_{t=1}^n$ is a strictly stationary Gaussian process. 
\end{center}

Under (B.1), let $\bm{G}_{CPDAG}=(\bm{V},\bm{E}(\bm{G}_{CPDAG}))$ be the $CPDAG$ from the DAG $\bm{G}$. Next, $\bm{G}_{CPDAG}$ is transformed to obtain $\bm{G}_{CPDAG,R}$ with nodes $1,\ldots,p$, and directed edges $\bm{E}(\bm{G}_{CPDAG,R})$ such that $u\rightarrow v \in \bm{E}(\bm{G}_{CPDAG,R})$ if and only if $(u,t_1)\rightarrow (v,t_2) \in \bm{E}(\bm{G}_{CPDAG})$ for some $1\leq t_1 \leq t_2 \leq \tau$. $\bm{G}_{CPDAG,R}$ is the Rolled Graph of $\bm{G}_{CPDAG}$ and referred to as the Rolled Markov Graph with respect to $\bm{\chi}_t$.

The Time-Aware PC Algorithm uses PC Algorithm to estimate $\bm{G}_{CPDAG}$ based on $\bm{\chi}_1,\ldots,\bm{\chi}_n$ by $\hat{\bm{G}}_{CPDAG}$. Next, the estimate $\hat{\bm{G}}_{CPDAG}$ is transformed to obtain $\hat{\bm{G}}_{CPDAG,R}$. See \citep{biswas2022statistical2} for more details.

\begin{algorithm}
\SetKwInOut{Input}{Input}
\SetKwInOut{Output}{Output}
\Input{Vertex Set $\bm{V}$, $\{\bm{\chi}_t\}_{t=1}^{N}$}
\Output{$\hat{\bm{G}}_{CPDAG,R}$ (estimate of $\bm{G}_{CPDAG,R}$)}

Use PC Algorithm to estimate $\bm{G}_{CPDAG}$ based on $\bm{\chi}_1,\ldots,\bm{\chi}_N$ by $\hat{\bm{G}}_{CPDAG}$. 

Next, $\hat{\bm{G}}_{CPDAG}$ is transformed to obtain $\hat{\bm{G}}_{CPDAG,R}$.
\caption{Time-Aware PC Algorithm}
\end{algorithm}

For any DAG $\tilde{\bm G}$ with nodes $\bm{V}$, recall that $pa_{R,\tilde{\bm G}}(v)$ denotes the parents of $v$ in the Rolled Graph of $\tilde{\bm G}$. We show that the parents of a node $v$ in a Rolled Markov Graph with respect to $\bm{\chi}_t$ is the union of parents of $v$ in the Rolled Graphs from a Markov equivalence class. For any graph $\tilde{\bm{G}}=(\bm{V},\bm{E}(\tilde{\bm{G}}))$ in the Markov equivalence class $\mathcal{M}(\bm{G})$. Finally, let $pa_{R}(v)$ denote the parent set of $v$ in $\bm{G}_{CPDAG,R}$. Below, we show that $pa_{R}(v)$ can be expressed as the union of $pa_{R,\tilde{\bm G}}(v)$ over all rolled graphs $\tilde{\bm G}$ in the Markov equivalence class of $\bm G$.
\begin{lemma}
If $\bm{\chi}_t$ is faithful with respect to $\bm{G}$, then
\[
pa_{R}(v) = \bigcup_{\tilde{\bm{G}}\in\mathcal{M}(\bm{G})} pa_{R,\tilde{\bm G}}(v)
\]
\end{lemma}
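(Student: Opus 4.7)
My plan is to unfold the definitions and establish both set inclusions directly, leaning on the standard characterization of the CPDAG recalled earlier in the paper: an edge is directed as $a\to b$ in $\bm{G}_{CPDAG}$ iff every member of $\mathcal{M}(\bm{G})$ orients it that way, and it is undirected in $\bm{G}_{CPDAG}$ iff some members orient it as $a\to b$ and others as $b\to a$. In particular, all DAGs in $\mathcal{M}(\bm{G})$ share a common skeleton with $\bm{G}_{CPDAG}$.

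For the inclusion $pa_{R}(v)\subseteq \bigcup_{\tilde{\bm{G}}\in\mathcal{M}(\bm{G})} pa_{R,\tilde{\bm{G}}}(v)$, I will take an arbitrary $u\in pa_R(v)$, extract times $t_1\leq t_2$ with an edge from $(u,t_1)$ to $(v,t_2)$ in $\bm{G}_{CPDAG}$ (using the definition of the Rolled Graph), and split into two cases. If that edge is directed in the CPDAG, the characterization forces $(u,t_1)\to(v,t_2)$ to hold in every $\tilde{\bm{G}}\in\mathcal{M}(\bm{G})$, so $u$ belongs to every $pa_{R,\tilde{\bm{G}}}(v)$, hence to the union. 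If the edge is undirected, the same characterization guarantees at least one $\tilde{\bm{G}}\in\mathcal{M}(\bm{G})$ that orients it as $(u,t_1)\to(v,t_2)$, again placing $u$ in the union.

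For the reverse inclusion, I will take $u\in pa_{R,\tilde{\bm{G}}}(v)$ for some $\tilde{\bm{G}}\in\mathcal{M}(\bm{G})$, with witnessing times $t_1\leq t_2$ and $(u,t_1)\to(v,t_2)\in\tilde{\bm{G}}$. Because $\tilde{\bm{G}}$ and $\bm{G}_{CPDAG}$ share a skeleton, there is an edge between $(u,t_1)$ and $(v,t_2)$ in $\bm{G}_{CPDAG}$; it cannot be directed as $(v,t_2)\to(u,t_1)$ in the CPDAG, since that would force the same reversed orientation in $\tilde{\bm{G}}$ and contradict the assumption. Hence the CPDAG edge is either directed as $(u,t_1)\to(v,t_2)$ or undirected, and in either case the Rolled Graph construction produces the edge $u\to v$, so $u\in pa_R(v)$.

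The argument is essentially a careful bookkeeping exercise; no serious technical obstacle arises. The one point that requires attention is the convention that an undirected CPDAG edge between $(u,t_1)$ and $(v,t_2)$ with $t_1\leq t_2$ still contributes the directed edge $u\to v$ to $\bm{G}_{CPDAG,R}$, because at least one representative DAG in $\mathcal{M}(\bm{G})$ orients it as $(u,t_1)\to(v,t_2)$. Being explicit about this convention is the only place a reader could get stuck.
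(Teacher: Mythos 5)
Your proof is correct and follows essentially the same route as the paper's: both inclusions are established by unfolding the definitions of the Rolled Graph and using the characterization of $\bm{G}_{CPDAG}$ as recording exactly those orientations that occur in $\mathcal{M}(\bm{G})$. Your explicit case split between directed and undirected CPDAG edges is a slightly more careful rendering of what the paper's terser proof leaves implicit in the statement ``$(v',t')\rightarrow (v,t)\in \bm{E}(\bm{G}_{CPDAG})$'', but it is the same argument.
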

\begin{proof}
Choose $v'\in pa_{R,\tilde{\bm G}}(v)$ for some $\tilde{\bm{G}}\in\mathcal{M}(\bm{G})$. Then, by definition, $(v',t')\rightarrow (v,t)\in \bm{E}(\tilde{\bm{G}})$ for some $t'\in t-\tau+1,\ldots, t$. Therefore, $(v',t')\rightarrow (v,t)\in \bm{E}(\bm{G}_{CPDAG})$. So, $v'\rightarrow v \in \bm{G}_{CPDAG,R}$. Therefore, $v'\in pa_{R}(v)$. This shows that $\bigcup_{\tilde{\bm{G}}\in\mathcal{M}(\bm{G})} pa_{R,\tilde{\bm G}}(v)\subseteq pa_{R}(v)$.

For the other direction, choose $v'\in pa_{R}(v)$, whence $v'\rightarrow v\in \bm{E}(\bm{G}_{CPDAG,R})$. Therefore, there exists $t'\leq t$ such that $(v',t')\rightarrow (v,t)\in \bm{E}(\bm{G}_{CPDAG})$. Hence there exists $\tilde{\bm{G}}\in \mathcal{M}(\bm{G})$ such that $(v',t')\rightarrow (v,t)\in \bm{E}(\tilde{\bm{G}})$. By definition $v'\in pa_{R,\tilde{\bm G}}(v)$. Therefore, $pa_{R}(v) \subseteq \bigcup_{\tilde{\bm{G}}\in\mathcal{M}(\bm{G})} pa_{R,\tilde{\bm G}}(v)$.
\end{proof}

\subsection{Consistent Estimation with Time-Aware PC}
In this section, we show consistency of the Time-Aware PC algorithm.

\begin{theorem}[Consistency of Time-Aware PC algorithm]\label{ctpct}
Assume (B.1), (A.1)-(A.3) (or (A.1)*-(A.3)*). Denote by $\hat{\bm{G}}_{CPDAG,n}$ the estimate of $\bm G_{CPDAG}$ by the PC algorithm based on $\bm{\chi}_{1},\ldots,\bm{\chi}_{n-\tau+1}$. Then,
\[
P(\hat{\bm{G}}_{CPDAG,n}=\bm{G}_{CPDAG})\rightarrow 1 \text{ as }n\rightarrow \infty
\]
\end{theorem}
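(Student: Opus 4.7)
The plan is to reduce the theorem to a direct application of Theorems \ref{thm:skelconsistency} and \ref{thm:cpdagconsistency} to the stacked process $\{\bm{\chi}_t\}$, viewed as a strictly stationary time series taking values in $\mathbb{R}^{p\tau}$. Since the Time-Aware PC algorithm is, by construction, the ordinary sample PC algorithm run on $\bm{\chi}_1,\ldots,\bm{\chi}_{n-\tau+1}$ followed by the deterministic rolling map $\hat{\bm{G}}_{CPDAG}\mapsto \hat{\bm{G}}_{CPDAG,R}$, it suffices to verify that the hypotheses needed for CPDAG consistency transfer from $\bm{X}_t$ to $\bm{\chi}_t$. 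The faithfulness hypothesis (A.4) is replaced by (B.1), which is precisely faithfulness of $\bm{\chi}_t$ with respect to $\bm{G}$, so that ingredient is already in place.

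The main task is to verify the mixing and moment conditions (A.1)--(A.3) (or their starred versions) for $\bm{\chi}_t$. Strict stationarity of $\{\bm{\chi}_t\}$ has already been noted in the paper. Every coordinate of $\bm{\chi}_t$ coincides with some $X_{s,u}$, so $\mathbb{E}|\chi_{t,i'}|^q \le \max_u \mathbb{E}|X_{1,u}|^q$ for every $q$, and the required moment bounds on $\bm{X}_t$ transfer verbatim. For the mixing coefficients the key observation is that, for any pair $i',j'\in\{1,\ldots,p\tau\}$, the $\sigma$-field generated by $\{\chi_{t,i'},\chi_{t,j'}: a\le t\le b\}$ is contained in the $\sigma$-field generated by $\{X_{s,u_1},X_{s,u_2}: 1+(a-1)r\le s\le \tau+(b-1)r\}$ for some $u_1,u_2\in\{1,\ldots,p\}$ determined by $i',j'$. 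A direct comparison in the definition of the maximal correlation coefficient then yields
\[
\xi^{\bm{\chi}}_{i',j'}(k)\;\le\;\max_{u_1,u_2}\xi^{\bm{X}}_{u_1,u_2}\bigl(rk-\tau+1\bigr)
\]
for all sufficiently large $k$, and an analogous bound with $\alpha$ in place of $\xi$ holds in the strongly mixing setting. Because $r,\tau,p$ are fixed independently of $n$, the summability condition (A.2) and the vanishing condition (A.3) for $\bm{X}$ both propagate to $\bm{\chi}$ after a harmless affine rescaling of the gap parameter.

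Once (A.1)--(A.3) (or (A.1)*--(A.3)*) are in hand for $\bm{\chi}_t$, Lemma \ref{cor4p1} in the Gaussian regime and Lemma \ref{lem:rhoconsistenth} in the non-Gaussian regime supply the consistent estimators of partial correlations / Hilbert--Schmidt conditional dependence measures needed by the PC algorithm run on $\bm{\chi}_t$. Theorems \ref{thm:skelconsistency} and \ref{thm:cpdagconsistency} then yield $P(\hat{\bm{G}}_{CPDAG,n}=\bm{G}_{CPDAG})\to 1$, as required. The main obstacle, insofar as there is one, is the mixing-coefficient bookkeeping above: a pair of coordinates of $\bm{\chi}_t$ amalgamates information from up to two distinct coordinates of $\bm{X}$ across a window of length $\tau$, and one has to ensure that this finite (in $n$) amalgamation does not destroy the decay of the mixing coefficients. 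Because $p$ and $\tau$ are fixed, this reduces to a finite combinatorial argument and is essentially automatic.
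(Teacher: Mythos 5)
Your proposal is correct and follows essentially the same route as the paper: the paper's proof also reduces to Theorem \ref{thm:cpdagconsistency} by observing that (B.1) plays the role of (A.4) for $\bm{\chi}_t$ and by transferring (A.1)--(A.3) (or the starred versions) from $\bm{X}_t$ to $\bm{\chi}_t$ via exactly the $\sigma$-field inclusion and affine rescaling of the gap that you describe (this is Lemma \ref{lemma:chirhomixing} in the appendix, which gives the bound $\tilde{\xi}_{ij}(k)\le \xi_{r_i,r_j}(kr-(q_j-q_i)-1)$, matching your $\xi^{\bm{\chi}}_{i',j'}(k)\le\max_{u_1,u_2}\xi^{\bm{X}}_{u_1,u_2}(rk-\tau+1)$ up to bookkeeping). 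No gaps.
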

\begin{proof}
By Lemma \ref{lemma:chirhomixing}, if $\bm{X}_t$ satisfies (A.1)-(A.3) (or (A.1)*-(A.3)*), then so does $\bm{\chi}_t$. Also, note that (B.1) is same as (A.4) with respect to $\bm{\chi}_t$ being faithful with $\bm{G}$. The statement then follows from Theorem \ref{thm:cpdagconsistency}.
\end{proof}

\begin{corollary} Since $\hat{\bm{G}}_{CPDAG,n}=\bm{G}_{CPDAG}\implies \hat{\bm{G}}_{CPDAG,R,n}=\bm{G}_{CPDAG,R}$, therefore, also, $P(\hat{\bm{G}}_{CPDAG,R,n}=\bm{G}_{CPDAG,R})\rightarrow 1$ as $n\rightarrow \infty$.  
\end{corollary}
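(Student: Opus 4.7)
The plan is to reduce the corollary to a one-line consequence of Theorem~\ref{ctpct} via monotonicity of probability. The key observation is that the mapping from a CPDAG on the extended vertex set $\bm V$ to its Rolled Graph on $V=\{1,\ldots,p\}$ is a purely deterministic set-theoretic operation: by the definition given just before Algorithm~4, one puts $u\to v$ in the rolled graph if and only if there exist $1\le t_1\le t_2\le \tau$ with $(u,t_1)\to(v,t_2)$ an edge of the source CPDAG. This map depends only on the edge set of its input and uses no data.

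First I would verify the implication stated in the corollary. On any sample point on which $\hat{\bm G}_{CPDAG,n}=\bm G_{CPDAG}$ holds as an equality of graphs on $\bm V$, applying the (deterministic) rolling procedure to both sides yields the same output, so $\hat{\bm G}_{CPDAG,R,n}=\bm G_{CPDAG,R}$. This gives the pointwise event inclusion
\[
\{\hat{\bm G}_{CPDAG,n}=\bm G_{CPDAG}\}\subseteq \{\hat{\bm G}_{CPDAG,R,n}=\bm G_{CPDAG,R}\}.
\]

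Then I would invoke monotonicity of $P$ to obtain
\[
P(\hat{\bm G}_{CPDAG,R,n}=\bm G_{CPDAG,R})\ \ge\ P(\hat{\bm G}_{CPDAG,n}=\bm G_{CPDAG}),
\]
and conclude by Theorem~\ref{ctpct}, which forces the right-hand side to $1$; since probabilities are bounded above by $1$, this sandwiches $P(\hat{\bm G}_{CPDAG,R,n}=\bm G_{CPDAG,R})\to 1$ as $n\to\infty$. There is no real obstacle here: the only subtlety worth flagging is that the rolling map is well-defined on an arbitrary CPDAG on $\bm V$, so the implication is a genuine deterministic statement rather than a probabilistic one, and no additional measurability or faithfulness argument is required beyond what Theorem~\ref{ctpct} already provides.
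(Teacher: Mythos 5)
Your proof is correct and takes essentially the same route as the paper: the corollary's justification in the text is precisely the deterministic event inclusion $\{\hat{\bm G}_{CPDAG,n}=\bm G_{CPDAG}\}\subseteq\{\hat{\bm G}_{CPDAG,R,n}=\bm G_{CPDAG,R}\}$ (since rolling is a data-free map on edge sets), followed by monotonicity of probability and Theorem~\ref{ctpct}. Your write-up simply makes these steps explicit.
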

\begin{remark}[Subsampled Time-Aware PC for non-stationary processes (TPCNS)] In practice, the entire time series may not be strictly stationary, in such a scenario, the Time-Aware PC algorithm is conducted in a shorter time-window of length $L$ selected at random, i.e. $[n_i, n_i+L]\subseteq \{1,\ldots,\lfloor \frac{n-\tau}{r} \rfloor+1\}$, and a set of graphs $\hat{\bm{G}}_{CPDAG}^{(i)}$ is estimated with $\bm{\chi}_t, t\in [n_i, n_i+L]$. If there is an edge $(u,t_1)\rightarrow (v,t_2)$ for $t_1>t_2$ in $\hat{\bm{G}}_{CPDAG}^{(i)}$ then the edge is re-oriented to $(u,t_1)\rightarrow (v,t_2)$ in $\hat{\bm{G}}_{CPDAG}^{(i)}$, which is then transformed to $\hat{\bm{G}}_{CPDAG,R}^{(i)}$. Next, a single graph $ \hat{\bm{G}}_{CPDAG,R}^*$ is obtained by union of $\hat{\bm{G}}_{CPDAG,R}^{(i)}$ while excluding those edges with exceedingly low frequency of occurrence. At the last step, $\hat{\bm{G}}_{CPDAG,R}^{(i)}$ is pruned by excluding those edges with exceedingly low edge weights which can be obtained by interventional causal effects.
\end{remark}
\begin{example}[VARMA process]
Assume now that $\bm{X}_t$ is the stationary VARMA process \eqref{varmaeq} satisfying the assumptions of Theorem \ref{thm:armacpdagconsistent}. In this example, consistency of the Time-Aware PC algorithm holds under B.1, by Theorem \ref{ctpct}.


\end{example}
\begin{example}[Linear Process] 
Similarly, if $\bm X_t$ is the stationary linear process \eqref{stln} satisfying the assumptions of Theorem \ref{thm:linearcpdagconsistent}, then once again, consistency of the Time-Aware PC algorithm holds under B.1, by Theorem \ref{ctpct}.
\end{example}

\section{Simulation Studies}
We compare the performance of Time-Aware PC for stationary data (TPCS), subsampled Time-Aware PC for non-stationary data (TPCNS), the usual PC algorithm using partial correlation test as well as Hilbert-Schmidt test for conditional independence (denoted TPCSHS, TPCNSHS and PCHS for the latter), and Granger Causality (GC) to recover the ground truth causal relations from four simulation paradigms. The simulation paradigms correspond to specific model assumptions to assess the impact of model specifications on the performance of the approaches (See Appendix \ref{simul:details}).

\begin{figure}
    \centering
    \includegraphics[width=\textwidth]{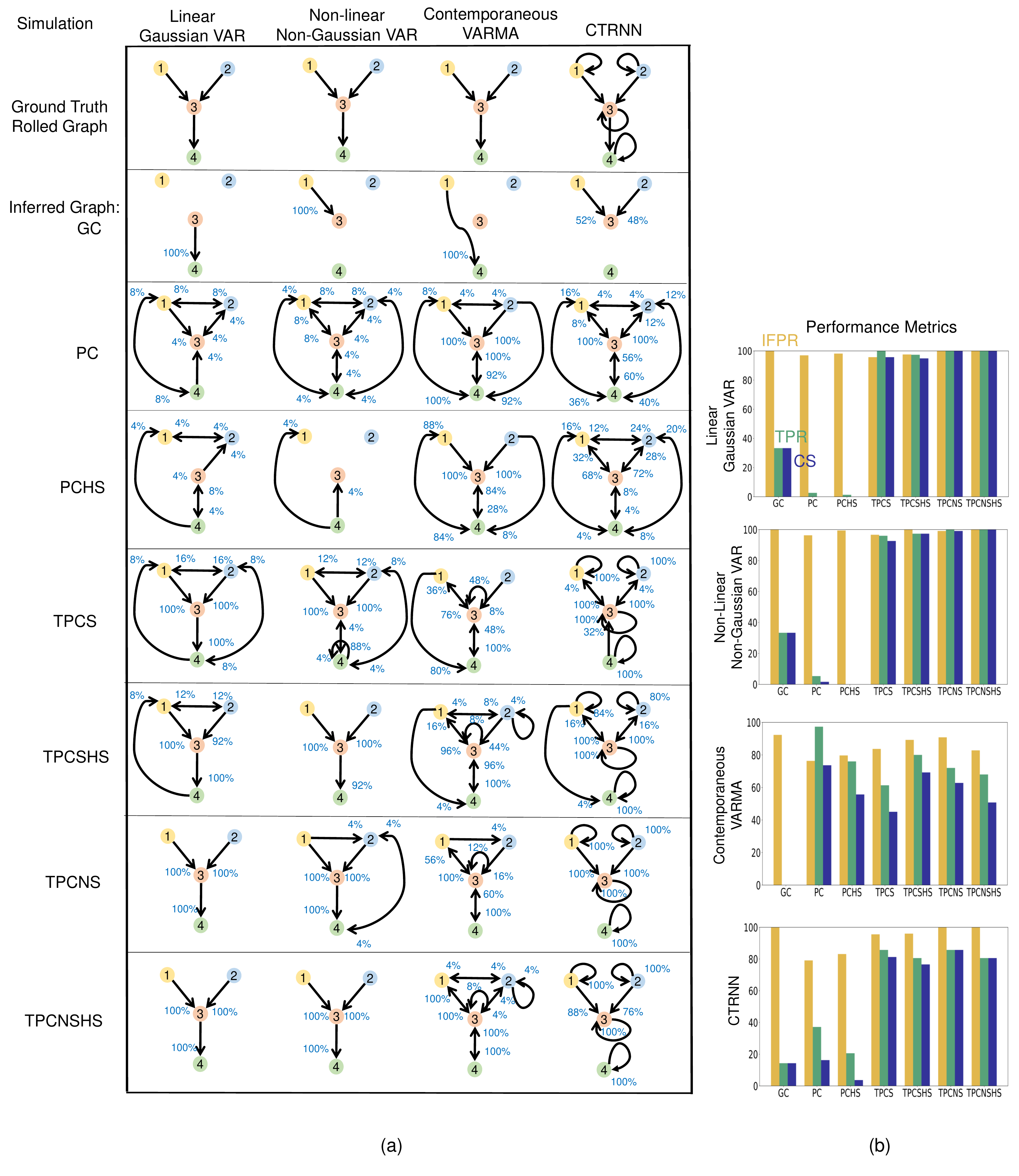}
    \caption{(a) The outcomes of GC, PC, PCHS, TPCS, TPCHS, TPCNS and TPCNSHS are compared on four examples of motifs and simulation paradigms; from left to right: Linear Gaussian VAR, Non-linear Non-Gaussian VAR, Contemporaneous VARMA and CTRNN. Table: 4-neurons motifs that define the Ground Truth (row 1) are depicted along  with inferred graphs over simulation instances by the different methods (row 2-8). Each inferred graph has an edge $v\rightarrow w$ that corresponds to an edge detected in any of the inference instances. The percentage (blue) next to each edge indicates the number of times the edge was detected out of all instances. (b) $\text{IFPR}$ (gold) , TP rate (green) and Combined Score (blue) of each method are shown for each motif.}
    \label{fig:simul_eval}
\end{figure}

We generated 25 simulated time series from each simulation paradigm and estimated the causal relationships from each time series. We summarized the performance of the methods in recovering the ground truth causal relationships using the following three metrics: (1) Combined Score (CS), (2) True Positive Rate (TPR), (3) 1 - False Positive Rate (IFPR). Let True Positive (TP) represent the number of correctly identified edges, True Negative (TN) represent the number of correctly identified missing edges, False Positive (FP) represent the number of incorrectly identified edges, and False Negative (FN) represent the number of incorrectly identified missing edges across simulations. IFPR is defined as: $$\text{IFPR}=\left(1-\frac{\text{FP}}{\text{FP+TN}}\right)\cdot 100,$$ which measures the ratio of the number of correctly identified missing edges by the algorithm to the total number of true missing edges. Note that the rate is reported such that $100\%$ corresponds to no falsely detected edges. TPR is defined as $\text{TPR}=\left(\frac{\text{TP}}{\text{TP} + \text{FP}}\right)\cdot 100$ as i.e. the ratio of the number of correctly identified edges by the algorithm to the total number of true edges in percent. The Combined Score (CS) is given by Youden's Index \citep{vsimundic2009measures,hilden1996regret}, as follows, $\text{CS} = \text{TPR} - \text{FPR}$.

In implementations of the PC algorithm in the \emph{pcalg} package in \emph{R} and other software such as \emph{TETRAD$^{IV}$}, the Gaussian conditional dependence tests use a fixed level $\alpha$ and $\sqrt{n-\bm{k}-3} ~ g(\hat{\rho}) \leq \Phi^{-1}(1-\alpha)$, thereby we use $\gamma = \Phi^{-1}(1-\alpha)/\sqrt{n-\bm{k}-3}$ which also gives rise to a consistent test. The PC algorithm with Hilbert-Schmidt conditional independence criterion is implemented using the \emph{kpcalg} package in \emph{R}.

The true graphs in each of the simulations consist of $4$ nodes and $16$ possible edges (including self-loops) between the nodes. Therefore for the $25$ simulations in a simulation setting, there are a total of $25$ graphs to infer and $400$ possible edges. Figure \ref{fig:simul_eval} shows the ground truth graph in each simulation setting for noise standard deviation $\eta =1$ and performance of the estimation for $\alpha = 0.05$. We report the percentage of the simulations in which an edge is estimated to be present and also compare the approaches in estimating the true edges by TPR, IFPR and CS scores of performance. A higher percentage indicates higher confidence in the detection of that edge. Figure \ref{fig:comp_noise} compares the Combined Score of performance of the approaches over different values of $\alpha$ and noise standard deviation $\eta$ in each simulation setting.

\begin{itemize}[leftmargin=0pt]
    \item[] In \textit{Linear Gaussian VAR (left column in Figure \ref{fig:simul_eval})}, GC estimates a single edge $3\rightarrow 4$ among the three edges of the true graph in every iteration leading to IFPR $100\%$, TPR $33.3\%$ and CS $33.3\%$. In comparison, PC and PCHS do not estimate any edge more than $10\%$ of the iterations. Overall, PC has IFPR $96.9\%$, TPR $2.7\%$ and CS $-0.4\%$ and closely followed by PCHS. TPCS and TPCHS estimates the true edges for $1\rightarrow 3$ and $3\rightarrow 4$ for $100\%$ of the iterations, and the true edge $2\rightarrow 3$ in $100\%, 92\%$ of the iterations for TPCS and TPCHS respectively. There are other spurious edges also obtained in $8\%$-$16\%$ of the iterations. Overall, TPCS has IFPR $95.7\%$, TPR $100\%$, CS $95.7\%$ and TPCHS has IFPR $97.5\%$, TPR $97.3\%$ and CS $94.9\%$. TPCNS and TPCNSHS obtains the Ground Truth, with no spurious edges and obtains the true edges in all of the trials ($1\rightarrow 3, 2\rightarrow 3, 3\rightarrow 4$ with $100 \%$, $100 \%$ and $100 \%$ respectively), overall both having IFPR, TPR and CS of $100\%$. Thereby, among the three methods, we conclude that TPCNS, TPCNSHS, TPCS and TPCSHS have improvement in performance by a great margin ($+66.7\%$) in CS while TPCNS and TPCNSHS detects the true edges perfectly. The non-detection of the edges by PC algorithm can be explained by Theorem \ref{thm:armacpdagconsistent}, since at a fixed time the simulated process is faithful with respect to the empty graph, even though there are across-time causal relations leading to the ground truth rolled graph as in Figure \ref{fig:simul_eval}. 
    \item[] In the \textit{Non-linear Non-Gaussian VAR (second column)}, as previously, GC always detects a single edge $1\rightarrow 3$ with $100\%$ out of the three true edges. PC and PCHS does not detect any edge with greater than $8\%$ occurrence in iterations. In contrast, TPCS outputs the true edges with a high percentage ($1\rightarrow 3$, $2\rightarrow 3$,$3\rightarrow 4$ with $100\%, 100\%, 88\%$) and some spurious edges with a low percentage of less than $12\%$, while TPCSHS outputs the true edges with a high percentage ($1\rightarrow 3$, $2\rightarrow 3$,$3\rightarrow 4$ with $100\%, 100\%, 92\%$) and without spurious edges. TPCNS detects the true edges in all the trials ($1\rightarrow 3$, $2\rightarrow 3$,$3\rightarrow 4$ with $100\%, 100\%, 100\%$) and relatively lower spurious edges in $4\%$ of the trials. TPCNSHS detects the true edges in all the trials ($1\rightarrow 3$, $2\rightarrow 3$,$3\rightarrow 4$ with $100\%, 100\%, 100\%$) and no spurious edges. In summary, all approaches yielded $\text{IFPR}$ close to $100\%$, while for $\text{TPR}$ TPCNSHS and TPCNS outperformed the other approaches with TPR $100\%$, followed by TPCSHS and TPCS with $97\%,96\%$ respectively, and GC with $33.3\%$, and for CS also, TPCNSHS and TPCNS outperformed at $100\%, 99.1\%$, closely followed by TPCSHS and TPCS with $97.3\%, 92.6\%$ and GC at $33.3\%$. For this scenario, TPCNSHS has the best performance followed closely by other variants of TPC such as TPCSNS, TPCSHS and TPCS among the methods.
    \item[] In \textit{Contemporaneous VARMA (third column)}, the ground truth causal graph encode the linear dependence between variables at each fixed time as well as weaker relationship following the same rolled graph across time. GC obtains a spurious edge $1\rightarrow 4$ for $100\%$ of the iterations. PC estimates the true edges $1\rightarrow 3, 2\rightarrow 3, 3\rightarrow 4$ in $100\%,100\%,92\%$ of the trials, TPCS in $76\%,8\%,100\%$, TPCSHS in $96\%, 44\%, 100\%$, and TPCNS in $100\%,16\%,100\%$ trials with other spurious edges also detected. In summary, GC, PC, TPCS and TPCNS yielded $\text{IFPR} = 92.3\%, 76.3\%, 83.7\%, 90.7\%$, $\text{TPR} = 0\%, 97.3\%, 61.3\%, 72.0\%$ and CS $=-7.7\%, 73.5\%,45.0\%, 62.8\%$. TPCHS performed better than TPCS with $\text{IFPR}$ $76\%$, $\text{TPR}$ $80\%$, $\text{CS}$ $80\%$. For this scenario, PC has the highest performance, closely followed by TPCSHS, PCHS, TPCNS, TPCNSHS, TPCS and then followed by GC. This is an impractical scenario when the DAG encoding causal relations at each fixed time is also the same as the Ground Truth Rolled Graph, and the PC algorithm is shown to be consistent in Theorem \ref{thm:armacpdagconsistent} for the causal relations at each fixed time.
    \item[] In the \textit{CTRNN scenario (fourth column)}, self-loops are present for each neuron. GC obtains two of the three true non-self edges $1\rightarrow 3, 2\rightarrow 3$  for $52\%, 48\%$ of the trials. PC detects spurious edges in up to $12\%$ of the trials, but also infers the non-self true edges $1\rightarrow 3, 2\rightarrow 3$ for $100\%$ of the trials. TPCS outputs all the self true edges and non-self true edges $1\rightarrow 3, 2\rightarrow 3$ for $100\%$ of the trials while also detecting false edges in relatively few $4\%$ of the trials. In comparison, TPCNS infers no false edges and all the self true edges for $100\%$ of the trials and non-self true edges $1\rightarrow 3$ and $2\rightarrow 3$ for $100\%, 100\%$ of the trials. TPCNSHS also infers no false edges and all the self true edges for $100\%$ of the trials and non-self true edges $1\rightarrow 3$ and $2\rightarrow 3$ for $88\%, 78\%$ of the trials. In summary, IFPR of GC, PC, TPCS and TPCNS are $100\%, 79.1\%, 95.6\%, 100\%$, that for $\text{TPR}$ are $14.3\%, 37.1\%, 85.7\%, 85.7\%$, and that for CS are $14.3\%, 16.3\%, 81.3\%, 85.7\%$ respectively and the HS versions closely follow in performance. Among all methods, TPCS and TPCNS have the highest TPR together, followed by PC and lastly GC. TPCNS and GC have the highest IFPR having not detected any false edges, followed by TPCS and then PC. In terms of CS, TPCNS and TPCNSHS have the highest performance, followed by TPCS and TPCSHS, compared to other methods.
\end{itemize}

\begin{figure}
    \centering
    \includegraphics[width=0.9\textwidth]{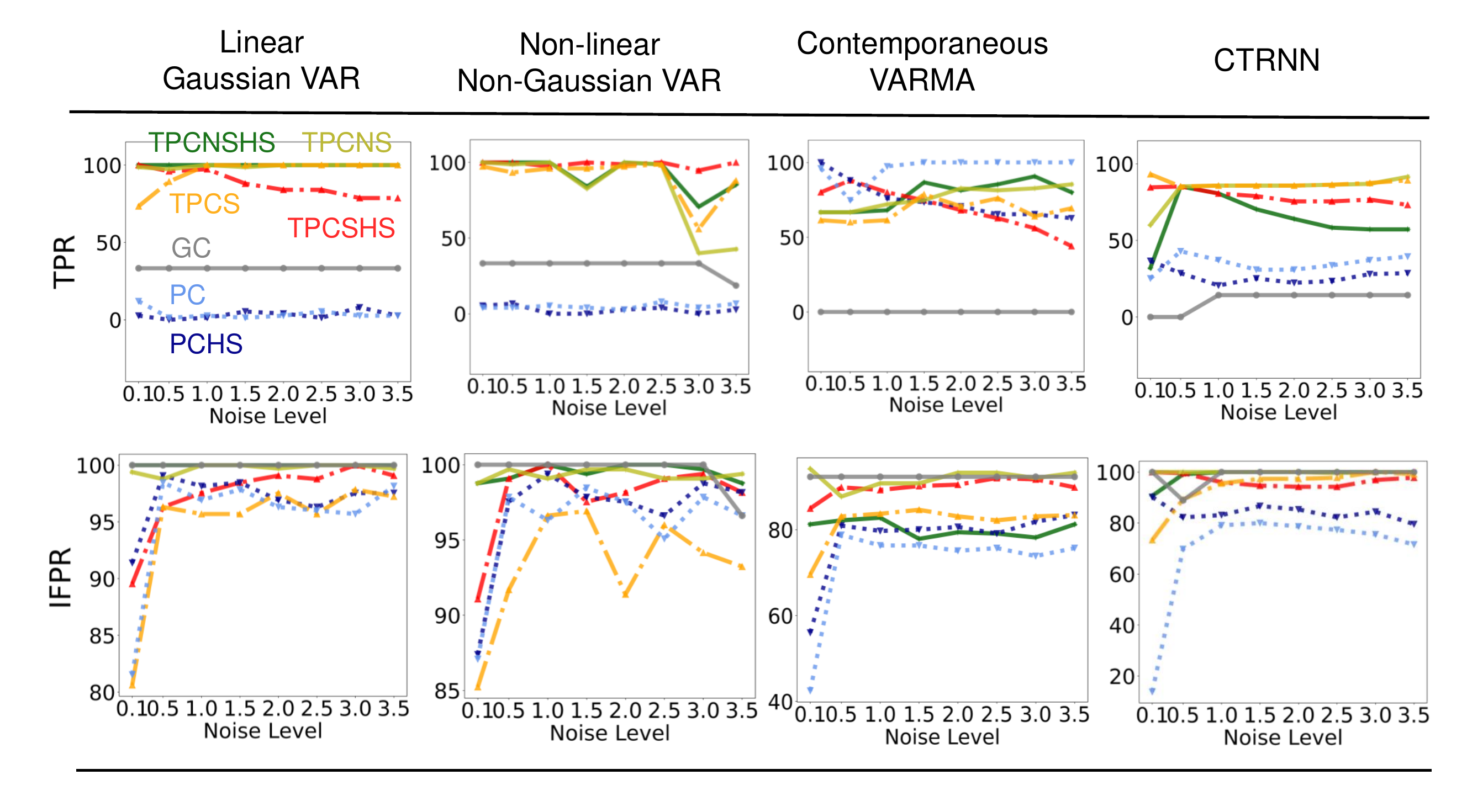}
    \caption{Performance of the algorithms - TPCNSHS (dark green), TPCNS (light green), TPCSHS (red), TPCS (orange), PCHS (dark blue), PC (light blue), GC(gray), as measured by TPR and IFPR in different simulation paradigms: Linear Gaussian VAR, Non-linear Non-Gaussian VAR, Contemporaneous VARMA and CTRNN, over increasing levels of noise variance and $\alpha=0.05$.}
    \label{fig:tprfpr}
\end{figure}
\begin{figure}
    \centering
    \includegraphics[width=\textwidth]{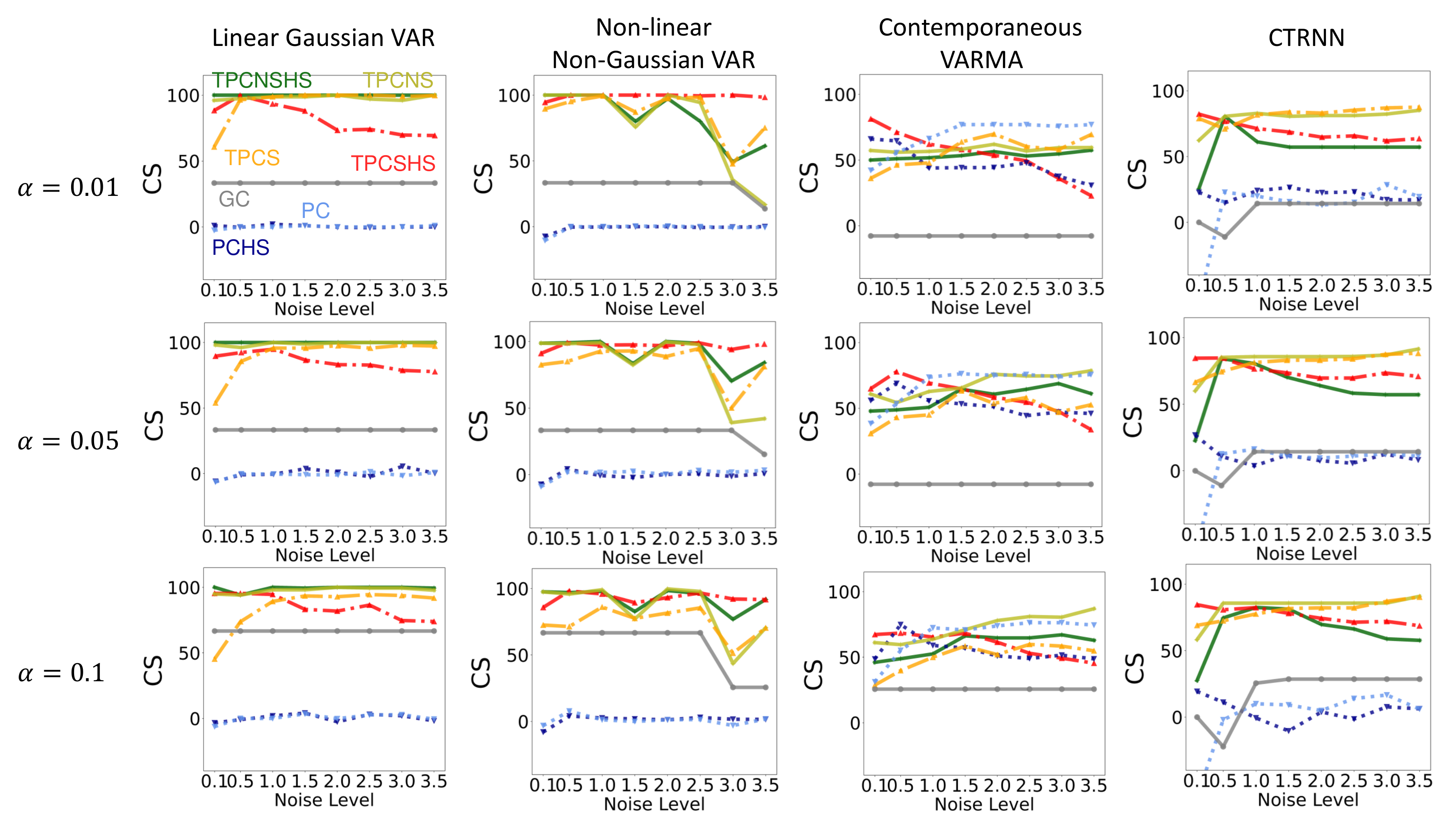}
    \caption{Combined Score of the four methods - TPCNSHS (dark green), TPCNS (light green), TPCSHS (red), TPCS (orange), PCHS (dark blue), PC (light blue), GC(gray), over varying noise levels in simulation $\eta = 0.1,0.5,1.0,\ldots,3.5$, for simulated motifs from Linear Gaussian VAR, Non-linear Non-Gaussian VAR, Contemporaneous VARMA and CTRNN paradigms (left to right), with $\alpha = 0.01, 0.05, 0.1$ for conditional dependence tests (top to bottom).}
    \label{fig:comp_noise}
\end{figure}

We compare the TPR and IFPR of the approaches across varying levels of noise standard deviation $\eta$ from $0.1$ to $3.5$ and $\alpha = 0.05$ in Figure \ref{fig:tprfpr}. In the Linear Gaussian scenario, we note that TPCNS and TPCNSHS has a TPR of $\approx 100\%$ across all levels of simulation noise, and is followed by TPCS and TPCHS, and then by GC in performance and lastly PC. In terms of IFPR also, TPCNSHS and TPCNS have the best performance overall. In the Non-linear Non-Gaussian scenario, TPCSHS has a TPR of $\approx 100\%$, while TPCNSHS has some fluctuations below $100\%$ at a few noise levels, and all the TPC variants have a TPR exceeding that of PC and GC. In terms of IFPR, TPCNSHS and TPCNS have an IFPR of $\approx 100\%$ and exceeds the other approaches. In the Contemporaneous scenario, PC and TPCNS have comparatively higher TPR while TPCNSHS, TPCNS and TPCSHS have comparatively higher IFPR than other approaches. In the CTRNN scenario, both the TPR and IFPR of TPCS and TPCNS are higher compared to the other methods over the different noise levels. We also compare the Combined Score of the approaches across varying levels of simulation noise $\eta$ from $0.1$ to $3.5$ and thresholding parameter $\alpha = 0.01,0.05,0.1$ in Figure \ref{fig:comp_noise}. In the Linear Gaussian scenario, we note that TPCNSHS and TPCNS have a CS of $\approx 100\%$ across all levels of simulation noise and thresholding parameter $\alpha$, and is closely followed by TPCS, then TPCSHS, GC, and lastly PC. In the Non-linear Non-Gaussian scenario, TPCSHS has the highest CS, closely followed by TPCS, TPCNS, and TPCNSHS, compared to other methods across levels of noise and $\alpha$. In the Contemporaneous scenario, PC has better CS for $\alpha = 0.01$ and for $\alpha=0.05,0.1$ TPCNS exceeds in performance for higher levels of noise, closely followed by PC and other variants of TPC and then GC. In the CTRNN scenario, TPCNS and TPCS have the highest CS overall followed by TPCNSHS and TPCSHS, compared to the other methods.

\section{Application to River Runoff Benchmark Data}
We also conducted analysis of a real benchmarking dataset in the public benchmarking platform - \emph{CauseMe} \citep{bussmann2021neural}. We used the \emph{River Runoff} dataset (See Appendix \ref{data:bm}). The River Runoff dataset is expected to have contemporaneous as well as across-time interactions with noise and  interaction strengths not controlled for, thereby would demonstrate the performance of the methods in an empirical setting. We compare the approaches selVAR \citep{pmlr-v123-weichwald20a}, SLARAC \citep{weichwald2020causal}, PCMCI-GPDC \citep{runge2019detecting} - which are among the top of the leaderboard for performance on the benchmarking dataset, GC, PC, TPC (Our) and TPCNS (Our). We used 1 - False Positive Rate (IFPR), True Positive Rate (TPR) and Combined Score given by Youden's Index (CS = TPR - FPR) (See Table \ref{tab:causalsummarybm}) to measure the performance of the algorithms.

\begin{table}[h]
\centering
\begin{tabular}{lll}\toprule
\multicolumn{3}{c}{\textbf{Combined [True, 1-False] Rates (\%)}}\\\midrule
     \textbf{Algorithm} & \textbf{River-Runoff (Real)}\\\midrule
    GC  & ~~~~~37 [45, 92]\\
    PC & ~~~~~38 [55, 83]\\
    PCMCI & ~~~~~45 [100, 45]\\
    SLARAC & ~~~~~50 [86, 64]\\
    PCHS & ~~~~~50 [64, 86]\\
    TPCSHS & ~~~~~50 [64, 87]\\
    selVAR & ~~~~~54 [91, 63]\\
    TPCS & 68(+14\%) [82, 86]\\
    TPCNS &\textbf{72(+18\%)} [82, 90]\\
    TPCNSHS & \textbf{72(+18\%)} [100, 72]\\\bottomrule
\end{tabular}
\vspace{2mm}
\caption{Comparison of performance on benchmarking datasets. For each dataset, each method's Combined Score, True Positive Rate, and 1-False Positive Rate are reported (Higher value is better).}\label{tab:causalsummarybm}
\end{table}
In terms of performance with respect to CS, TPCNS yielded the best performance with a CS of $72\%$, closely followed by TPCS at $68\%$ and selVAR, SLARAC, PCMCI-GPDC, PC and GC at $54\%,50\%,45\%, 38\%, 37\%$ respectively. TPCNS and TPCS exceeds the best among the existing approaches \textbf{by 18\%} and $14\%$ respectively. With respect to TPR, both TPCNS and PCMCI achieve the highest TPR at $100\%$, followed by selVAR, SLARAC, TPCNS, TPCS, PC and GC at $91\%,86\%,82\%,82\%,55\%, 45\%$ respectively. With respect to IFPR, GC has the best performance with a score of $92\%$ closely followed by TPCNS and TPCS at $90\%,86\%$, PC at $83\%$, and SLARAC, selVAR and PCMCI with at $64\%,63\%, 45\%$ respectively. PC and GC have a higher IFPR than one of TPCS or TPCNS because they detect less number of false edges, but that is achieved at the expense of detecting a less number of edges altogether including a less number of true edges which in turn leads to their TPR among the lowest in Table \ref{tab:causalsummarybm}. In comparison, TPCNS and TPCS detects true edges with greater sensitivity, thereby achieving a higher TPR. TPCS and TPCNS better maintains the trade-off between TPR and FPR, thereby leading to a better CS in comparison to the other methods. 

Since the coupling between variables as well as noise are not controlled and contemporaneous interactions are expected to be present as the sampling resolution is greater than the time taken for interactions between the variables, the real dataset provides a challenge for the methods. TPCS and TPCNS outperform other approaches by a CS of $14\%$ and $18\%$ respectively and shows significant improvement in performance than the other methods. This performance in a benchmarked real data setting demonstrates the applicability of TPC to real data scenarios.
\section{Conclusion}
We show that the PC algorithm is consistent for estimating the equivalence class of the DAG between variables from a stationary Gaussian time series with $\rho$-mixing properties, thereby demonstrating the asymptotic correctness of the PC algorithm in a dependent sampling setting. This enables us to show consistency of the PC algorithm in popular time series such as vector auto-regressive models and linear processes.

We also show that the Time-Aware PC algorithm consistently estimates the equivalence class of the DAG whose nodes are variables at different times using the PC algorithm and then transforms it into a rolled Markov graph between the variables. In contrast to PC, the consistency of the Time-Aware PC algorithm assumes faithfulness with respect to a DAG whose nodes are tuples of variable and time. The latter indicates causal relations between variables at different times. Therefore, in the time series scenario, while the PC algorithm supports finding of contemporaneous causal relations between the variables, Time-Aware PC provides a more general setting in which it first finds causal relations between the variables across different times and then transforms the temporal causal relations into the rolled Markov graph between the variables.

We compare the performance in recovery of ground truth causal relations for PC,  GC and Time-Aware PC for stationary and non-stationary time series in simulation studies with different model assumptions and benchmark real data. The results indicate the greater performance of Time-Aware PC in general when interactions in the time series are across different times and greater performance of PC in the specific scenario when the interactions are restricted to be contemporaneous.

\section{Acknowledgement}
Somabha Mukherjee was supported by the National University of Singapore Start-Up Grant R-155-000-233-133, 2021. The authors thank Eli Shlizerman for his helpful advice.
\bibliographystyle{unsrt}
\bibliography{main}

\begin{thebibliography}{10}

\bibitem{lauritzen1996graphical}
Steffen~L Lauritzen.
\newblock {\em Graphical models}, volume~17.
\newblock Clarendon Press, 1996.

\bibitem{chickering2002optimal}
David~Maxwell Chickering.
\newblock Optimal structure identification with greedy search.
\newblock {\em Journal of machine learning research}, 3(Nov):507--554, 2002.

\bibitem{hauser2012characterization}
Alain Hauser and Peter B{\"u}hlmann.
\newblock Characterization and greedy learning of interventional markov
  equivalence classes of directed acyclic graphs.
\newblock {\em The Journal of Machine Learning Research}, 13(1):2409--2464,
  2012.

\bibitem{spirtes2000causation}
Peter Spirtes, Clark~N Glymour, Richard Scheines, and David Heckerman.
\newblock {\em Causation, prediction, and search}.
\newblock MIT press, 2000.

\bibitem{spirtes1999algorithm}
Peter Spirtes, Christopher Meek, and Thomas Richardson.
\newblock An algorithm for causal inference in the presence of latent variables
  and selection bias.
\newblock {\em Computation, causation, and discovery}, 21:1--252, 1999.

\bibitem{tsamardinos2006max}
Ioannis Tsamardinos, Laura~E Brown, and Constantin~F Aliferis.
\newblock The max-min hill-climbing bayesian network structure learning
  algorithm.
\newblock {\em Machine learning}, 65(1):31--78, 2006.

\bibitem{glymour2019review}
Clark Glymour, Kun Zhang, and Peter Spirtes.
\newblock Review of causal discovery methods based on graphical models.
\newblock {\em Frontiers in genetics}, 10:524, 2019.

\bibitem{robins2003uniform}
James~M Robins, Richard Scheines, Peter Spirtes, and Larry Wasserman.
\newblock Uniform consistency in causal inference.
\newblock {\em Biometrika}, 90(3):491--515, 2003.

\bibitem{kalisch2007estimating}
Markus Kalisch and Peter B{\"u}hlman.
\newblock Estimating high-dimensional directed acyclic graphs with the
  pc-algorithm.
\newblock {\em Journal of Machine Learning Research}, 8(3), 2007.

\bibitem{dhamala2008analyzing}
Mukeshwar Dhamala, Govindan Rangarajan, and Mingzhou Ding.
\newblock Analyzing information flow in brain networks with nonparametric
  granger causality.
\newblock {\em Neuroimage}, 41(2):354--362, 2008.

\bibitem{abdul2014quantifying}
Fatimah Abdul~Razak and Henrik~Jeldtoft Jensen.
\newblock Quantifying ‘causality’in complex systems: understanding transfer
  entropy.
\newblock {\em PLoS One}, 9(6):e99462, 2014.

\bibitem{biswas2022statistical1}
Rahul Biswas and Eli Shlizerman.
\newblock Statistical perspective on functional and causal neural connectomics:
  A comparative study.
\newblock {\em Frontiers in Systems Neuroscience}, 16:817962, 2022.

\bibitem{smith2011network}
Stephen~M Smith, Karla~L Miller, Gholamreza Salimi-Khorshidi, Matthew Webster,
  Christian~F Beckmann, Thomas~E Nichols, Joseph~D Ramsey, and Mark~W Woolrich.
\newblock Network modelling methods for fmri.
\newblock {\em Neuroimage}, 54(2):875--891, 2011.

\bibitem{dahlhaus2003causality}
Rainer Dahlhaus and Michael Eichler.
\newblock Causality and graphical models in time series analysis.
\newblock {\em Oxford Statistical Science Series}, pages 115--137, 2003.

\bibitem{ebert2012causal}
Imme Ebert-Uphoff and Yi~Deng.
\newblock Causal discovery for climate research using graphical models.
\newblock {\em Journal of Climate}, 25(17):5648--5665, 2012.

\bibitem{runge2019detecting}
Jakob Runge, Peer Nowack, Marlene Kretschmer, Seth Flaxman, and Dino
  Sejdinovic.
\newblock Detecting and quantifying causal associations in large nonlinear time
  series datasets.
\newblock {\em Science Advances}, 5(11):eaau4996, 2019.

\bibitem{biswas2022statistical2}
Rahul Biswas and Eli Shlizerman.
\newblock Statistical perspective on functional and causal neural connectomics:
  The time-aware pc algorithm.
\newblock {\em arXiv preprint arXiv:2204.04845}, 2022.

\bibitem{drton2016structure}
Mathias Drton and Marloes~H Maathuis.
\newblock Structure learning in graphical modeling.
\newblock {\em arXiv preprint arXiv:1606.02359}, 2016.

\bibitem{meek1995strong}
Christopher Meek.
\newblock Strong-completeness and faithfulness in belief networks.
\newblock Technical report, Carnegie Mellon University, 1995.

\bibitem{chickering2002learning}
David~Maxwell Chickering.
\newblock Learning equivalence classes of bayesian-network structures.
\newblock {\em The Journal of Machine Learning Research}, 2:445--498, 2002.

\bibitem{verma2022equivalence}
Thomas~S Verma and Judea Pearl.
\newblock Equivalence and synthesis of causal models.
\newblock In {\em Probabilistic and Causal Inference: The Works of Judea
  Pearl}, pages 221--236. Morgan \& Claypool Publishers, 2022.

\bibitem{pearl2009causality}
Judea Pearl.
\newblock {\em Causality}.
\newblock Cambridge university press, 2009.

\bibitem{meek1995causal}
C~Meek.
\newblock Causal inference and causal explanation with background knowledge in
  uncertainty in artificial intelligence 11, 1995.

\bibitem{muirhead2009aspects}
Robb~J Muirhead.
\newblock {\em Aspects of multivariate statistical theory}.
\newblock John Wiley \& Sons, 2009.

\bibitem{fukumizu2007kernel}
Kenji Fukumizu, Arthur Gretton, Xiaohai Sun, and Bernhard Sch{\"o}lkopf.
\newblock Kernel measures of conditional dependence.
\newblock {\em Advances in neural information processing systems}, 20, 2007.

\bibitem{hotelling1953new}
Harold Hotelling.
\newblock New light on the correlation coefficient and its transforms.
\newblock {\em Journal of the Royal Statistical Society. Series B
  (Methodological)}, 15(2):193--232, 1953.

\bibitem{politis1994stationary}
Dimitris~N Politis and Joseph~P Romano.
\newblock The stationary bootstrap.
\newblock {\em Journal of the American Statistical association},
  89(428):1303--1313, 1994.

\bibitem{kolmogorov1960strong}
Andrei~Nikolaevich Kolmogorov and Yu~A Rozanov.
\newblock On strong mixing conditions for stationary gaussian processes.
\newblock {\em Theory of Probability \& Its Applications}, 5(2):204--208, 1960.

\bibitem{bradley2005basic}
Richard~C Bradley.
\newblock Basic properties of strong mixing conditions. a survey and some open
  questions.
\newblock {\em Probability surveys}, 2:107--144, 2005.

\bibitem{masry2011estimation}
Elias Masry.
\newblock The estimation of the correlation coefficient of bivariate data under
  dependence: Convergence analysis.
\newblock {\em Statistics \& Probability Letters}, 81(8):1039--1045, 2011.

\bibitem{pham1985some}
Tuan~D Pham and Lanh~T Tran.
\newblock Some mixing properties of time series models.
\newblock {\em Stochastic processes and their applications}, 19(2):297--303,
  1985.

\bibitem{eichler2013causal}
Michael Eichler.
\newblock Causal inference with multiple time series: principles and problems.
\newblock {\em Philosophical Transactions of the Royal Society A: Mathematical,
  Physical and Engineering Sciences}, 371(1997):20110613, 2013.

\bibitem{moraffah2021causal}
Raha Moraffah, Paras Sheth, Mansooreh Karami, Anchit Bhattacharya, Qianru Wang,
  Anique Tahir, Adrienne Raglin, and Huan Liu.
\newblock Causal inference for time series analysis: Problems, methods and
  evaluation.
\newblock {\em Knowledge and Information Systems}, pages 1--45, 2021.

\bibitem{granger2001essays}
Clive~WJ Granger.
\newblock {\em Essays in econometrics: collected papers of Clive WJ Granger},
  volume~32.
\newblock Cambridge University Press, 2001.

\bibitem{vsimundic2009measures}
Ana-Maria {\v{S}}imundi{\'c}.
\newblock Measures of diagnostic accuracy: basic definitions.
\newblock {\em Ejifcc}, 19(4):203, 2009.

\bibitem{hilden1996regret}
J{\o}rgen Hilden and Paul Glasziou.
\newblock Regret graphs, diagnostic uncertainty and youden's index.
\newblock {\em Statistics in medicine}, 15(10):969--986, 1996.

\bibitem{bussmann2021neural}
Bart Bussmann, Jannes Nys, and Steven Latr{\'e}.
\newblock Neural additive vector autoregression models for causal discovery in
  time series.
\newblock In {\em International Conference on Discovery Science}, pages
  446--460. Springer, 2021.

\bibitem{pmlr-v123-weichwald20a}
Sebastian Weichwald, Martin~E. Jakobsen, Phillip~B. Mogensen, Lasse Petersen,
  Nikolaj Thams, and Gherardo Varando.
\newblock Causal structure learning from time series: Large regression
  coefficients may predict causal links better in practice than small p-values.
\newblock In Hugo~Jair Escalante and Raia Hadsell, editors, {\em Proceedings of
  the NeurIPS 2019 Competition and Demonstration Track}, volume 123 of {\em
  Proceedings of Machine Learning Research}, pages 27--36. PMLR, 08--14 Dec
  2020.

\bibitem{weichwald2020causal}
Sebastian Weichwald, Martin~E. Jakobsen, Phillip~B. Mogensen, Lasse Petersen,
  Nikolaj Thams, and Gherardo Varando.
\newblock {Causal structure learning from time series: Large regression
  coefficients may predict causal links better in practice than small
  p-values}.
\newblock In Hugo~Jair Escalante and Raia Hadsell, editors, {\em Proceedings of
  the NeurIPS 2019 Competition and Demonstration Track, Proceedings of Machine
  Learning Research}, volume 123, pages 27--36. PMLR, 2020.

\bibitem{rokem2009nitime}
Ariel Rokem, M~Trumpis, and F~Perez.
\newblock Nitime: time-series analysis for neuroimaging data.
\newblock In {\em Proceedings of the 8th Python in Science Conference}, pages
  68--75, 2009.

\bibitem{schmidt2016multivariate}
Christoph Schmidt, Britta Pester, Nicole Schmid-Hertel, Herbert Witte, Axel
  Wism{\"u}ller, and Lutz Leistritz.
\newblock A multivariate granger causality concept towards full brain
  functional connectivity.
\newblock {\em PloS one}, 11(4):e0153105, 2016.

\bibitem{runge2019inferring}
Jakob Runge, Sebastian Bathiany, Erik Bollt, Gustau Camps-Valls, Dim Coumou,
  Ethan Deyle, Clark Glymour, Marlene Kretschmer, Miguel~D Mahecha, Jordi
  Mu{\~n}oz-Mar{\'\i}, et~al.
\newblock Inferring causation from time series in earth system sciences.
\newblock {\em Nature communications}, 10(1):1--13, 2019.

\end{thebibliography}

\appendix

\section{Facts about the VARMA Process \eqref{varmaeq}
}
\begin{lemma}\label{lemma:armarhomixing}
Suppose that the eigenvalues of $F$ in \eqref{varmaeq} are of modulus strictly less than 1. Then $\{X_{ti},X_{tj},t\geq 1\}$ is strongly mixing at an exponential rate for all $i,j$, that is, $\xi_{ij}(t)=\exp(-a_{ij}t)$ for some $a_{ij}>0$. Furthermore, the conditions $\int \| x\| g(x) dx < \infty$ and $\int \vert g(x+\theta) - g(x) \vert dx = O(\|\theta\|)$ hold for the Gaussian density $g$.
\end{lemma}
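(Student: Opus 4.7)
The lemma splits into two independent statements: (i) exponential strong mixing of each bivariate marginal process $\{(X_{ti},X_{tj})\}_{t\ge 1}$, and (ii) two analytic properties of the Gaussian density $g$. My approach is to handle these separately, exploiting the Markovian state representation \eqref{varmaeq} for the first and the explicit Gaussian form for the second.

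For (i), I would start from $\bm{X}_t = HZ_t$ with $Z_t = FZ_{t-1} + G\bm{\epsilon}_t$ and exploit that every eigenvalue of $F$ has modulus strictly less than one. Iterating gives $Z_t = F^tZ_0 + \sum_{k=0}^{t-1}F^kG\bm{\epsilon}_{t-k}$, and spectral radius less than one yields a geometric contraction $\|F^k\|\le C\lambda^k$ for some $\lambda\in(0,1)$. Combined with the assumed existence of an innovation density $g$, this places us in the setting of Theorem 2.1 in \citep{pham1985some}, which guarantees that the state chain $\{Z_t\}$ is geometrically ergodic and hence strongly mixing with coefficients bounded by $C\exp(-at)$ for some $a>0$. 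Since $\bm{X}_t$ is a measurable function of $Z_t$ alone, the $\sigma$-field $\mathcal{F}_c^d$ generated by $\{(X_{ti},X_{tj}):c\le t\le d\}$ is contained in $\sigma(Z_c,\ldots,Z_d)$, so the mixing coefficients of the bivariate sub-process are dominated by those of the state chain and inherit the exponential decay rate $e^{-a_{ij}t}$ for a suitable $a_{ij}>0$.

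For (ii), the first-moment bound $\int\|x\|\,g(x)\,dx<\infty$ is immediate for a nondegenerate Gaussian. For the $L^1$-Lipschitz condition, I would apply the fundamental theorem of calculus,
\[
g(x+\theta)-g(x)=\int_0^1 \theta^{\top}\nabla g(x+s\theta)\,ds,
\]
take absolute values, integrate over $x$, and swap the order of integration by Fubini. Translation invariance of Lebesgue measure collapses the inner integral to $\int\|\nabla g(y)\|\,dy$, giving
\[
\int|g(x+\theta)-g(x)|\,dx\;\le\;\|\theta\|\int\|\nabla g(y)\|\,dy.
\]
Since $\nabla g(y)=-\Sigma^{-1}(y-\mu)g(y)$ for a Gaussian with mean $\mu$ and covariance $\Sigma$, the remaining integral reduces to a constant multiple of $\int\|y-\mu\|\,g(y)\,dy$, which is finite by the first-moment bound already established; this yields the claimed $O(\|\theta\|)$ estimate.

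\textbf{Main obstacle.} Part (ii) is essentially calculus, so the substantive step is (i): quantitatively moving from ``spectral radius of $F$ less than one, innovations admitting a density'' to exponential mixing of the observed process. The cleanest route is to invoke Theorem 2.1 of \citep{pham1985some} directly, mirroring the strategy already used in Lemma \ref{lem:linearprocessmixing}; an alternative would be a Meyn--Tweedie style geometric drift argument using $V(z)=\|z\|$ as a Lyapunov function, with a minorization condition on small sets supplied by the innovation density. One small bookkeeping point to verify is that the passage from the state-chain mixing coefficients to those of the bivariate observed sub-process preserves the exponential rate, but this follows immediately from the $\sigma$-field inclusion noted above.
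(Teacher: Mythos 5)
Your proposal is correct and follows essentially the same route as the paper: the mixing claim is delegated to Pham and Tran (1985) (the paper invokes their Theorem 3.1 for the Markovian/VARMA representation, deducing exponential decay of the Gastwirth--Rubin coefficients $\|\Delta_t\|_1$ and then using $\alpha_{ij}(t)\le 4\|\Delta_t\|_1$, whereas you cite Theorem 2.1, which the paper reserves for linear processes --- a minor citation point since your $\sigma$-field inclusion argument for passing from the state chain to the observed bivariate sub-process is sound). The Gaussian density estimates are obtained by the identical fundamental-theorem-of-calculus, Fubini, and change-of-variables argument, reducing to $\int\|\nabla g\|<\infty$ via the first absolute moment.
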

\begin{proof}
It follows from Theorem 3.1 in \citep{pham1985some}, that $\Vert \Delta_t\Vert_1\rightarrow 0$ at exponential rate, where $\Vert \Delta_t \Vert_1$ is a Gastwirth and Rubin mixing coefficient of $\bm{X}_t,t\geq 1$. Furthermore, $\alpha_{ij}(t)\leq 4\Vert \Delta_t \Vert_1$ (See \citep{pham1985some}), where $\alpha_{ij}(t)$ are the strong mixing coefficients of $\bm{X}_t,t\geq 1$. Hence, $\alpha_{ij}(t)\rightarrow 0$ as $t\rightarrow \infty$, at an exponential rate.

Now, $\int \| x\| g(x) dx < \infty$ holds trivially for the Gaussian density $g$. Therefore, it only remains to show that $\int \vert g(x) - g(x-\theta) \vert dx = O(\Vert \theta\Vert)$ for the Gaussian density. Towards this, note that:

\begin{align*}
    \int \vert g(x+\theta) - g(x) \vert dx &= \int\left\vert\int_{0}^{1} \nabla g(x+z\theta) dz\cdot \theta\right\vert dx \\
    &\leq \Vert \theta \Vert\int \int_{0}^{1} \Vert \nabla g(x+z\theta)\Vert  dz ~dx \\
    &= \Vert \theta \Vert \int_{0}^{1} \int  \Vert \nabla g(x+z\theta)\Vert dx~ dz\\
    &= \Vert \theta \Vert \int_{0}^{1} \int  \Vert \nabla g(x)\Vert dx~ dz\\
    &\text{by change of variable from $x+z\theta$ to $x$}\\&\lesssim \Vert \theta \Vert\int_0^1 \sup_i \mathbb{E}_{\bm Z\sim N(\bm \mu,\Sigma)} |Z_i|~dz = O(\|\theta\|). \numberthis\label{eq:gaussnormtheta}
\end{align*}
where $a \lesssim b$ denotes that $a \le C b$ for some universal constant $C>0$.
\end{proof}
\begin{lemma}\label{lemma:chirhomixing}
If $\bm{X}_t$ satisfies (A.1)-(A.3) (or (A.1)* - (A.3)*), then so does $\bm{\chi}_t$.
\end{lemma}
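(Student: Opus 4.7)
The plan is to exploit the fact that each block variable $\bm{\chi}_t$ is just a deterministic function of the original $\bm{X}_s$'s for $s\in\{1+(t-1)r,\ldots,\tau+(t-1)r\}$, so that the $\sigma$-fields generated by the $\bm{\chi}$-process on any index set are contained in the corresponding $\sigma$-fields of the original $\bm X$-process on an inflated index set. Since the maximal correlation coefficient (and likewise the strong mixing coefficient) is monotone with respect to the generating $\sigma$-fields, we will get a direct inequality relating $\xi^{\bm{\chi}}_{ij}(k)$ to $\xi^{X}_{v_1,v_2}(\cdot)$ evaluated at a rescaled gap.

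Concretely, fix a pair of components $i,j\in\{1,\ldots,p\tau\}$ of $\bm{\chi}_t$ corresponding to coordinates $(v_1,t_1')$ and $(v_2,t_2')$ with $v_1,v_2\in\{1,\ldots,p\}$ and $t_1',t_2'\in\{1,\ldots,\tau\}$. I will observe that the $\sigma$-field generated by $\{\chi_{t,i},\chi_{t,j}:1\le t\le l\}$ is contained in the $\sigma$-field generated by $\{X_{s,v_1},X_{s,v_2}:1\le s\le \tau+(l-1)r\}$, while the $\sigma$-field generated by $\{\chi_{t,i},\chi_{t,j}:t\ge l+k\}$ is contained in the one generated by $\{X_{s,v_1},X_{s,v_2}:s\ge 1+(l+k-1)r\}$. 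The $\bm X$-time gap between these two inflated intervals is at least $kr-\tau+1$. Monotonicity of the maximal correlation then yields $\xi^{\bm{\chi}}_{ij}(k)\le \xi^{X}_{v_1,v_2}(kr-\tau+1)$ for all $k$ large enough, and an identical argument gives $\alpha^{\bm{\chi}}_{ij}(k)\le\alpha^{X}_{v_1,v_2}(kr-\tau+1)$ in the strongly mixing case.

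From this master inequality, each of (A.1)--(A.3) (or (A.1)*--(A.3)*) transfers routinely. For (A.1)/(A.1)*, the right-hand side tends to $0$ as $k\to\infty$. For (A.2)/(A.2)*, the moment conditions hold because each component of $\bm{\chi}_t$ is literally some $X_{s,v}$, and the mixing-rate sums $\sum_k\xi^{\bm{\chi}}_{ij}(k)$ (respectively $\sum_k\alpha^{\bm{\chi}}_{ij}(k)^{1-2/\delta}$) are controlled by summing $\xi^{X}_{v_1,v_2}$ (respectively $\alpha^{X}_{v_1,v_2}^{1-2/\delta}$) along the arithmetic subsequence $\{kr-\tau+1\}_k$, which is bounded by the full sum. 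For (A.3)/(A.3)*, given $s_n$ for $\bm X$, I will set $s^{\bm{\chi}}_N:=\lceil(s_n+\tau-1)/r\rceil$ where $N=\lfloor(n-\tau)/r\rfloor+1$; then $s^{\bm{\chi}}_N r-\tau+1\ge s_n$, and using $N\asymp n/r$ one gets $N^{1/2}\xi^{\bm{\chi}}_{ij}(s^{\bm{\chi}}_N)\le (n^{1/2}/\sqrt{r})\,\xi^{X}_{v_1,v_2}(s_n)\to 0$, while $s^{\bm{\chi}}_N=O(s_n)=o(n^{1/2})=o(N^{1/2})$.

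The proof is essentially bookkeeping, so there is no genuinely hard step; the only place that requires care is the conversion in (A.3)/(A.3)*, where one must translate between the $n$-scale of the original series and the $N$-scale of the block series to verify simultaneously that $s^{\bm{\chi}}_N\to\infty$, $s^{\bm{\chi}}_N=o(N^{1/2})$, and $N^{1/2}\xi^{\bm{\chi}}_{ij}(s^{\bm{\chi}}_N)\to 0$. Once this indexing is handled cleanly, the rest follows from monotonicity of the mixing coefficients and the fact that subsampling an arithmetic progression cannot make a convergent positive series diverge.
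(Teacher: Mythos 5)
Your proposal is correct and follows essentially the same route as the paper's proof: both rest on the $\sigma$-field inclusions for the block process, monotonicity of the mixing coefficients, and the resulting bound $\xi^{\bm{\chi}}_{ij}(k)\le \xi^{X}_{v_1,v_2}(kr-c)$ for a constant shift $c$, from which (A.1)--(A.3) (and the starred versions) transfer by the subsequence-sum and rescaled-lag arguments you give. The only differences are cosmetic: the paper tracks the exact coordinate offsets via a quotient/remainder decomposition to get a slightly sharper shift than your uniform $\tau-1$, while you are somewhat more explicit about the $n$-versus-$N$ index translation in (A.3).
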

\begin{proof}
We only address the case when $\bm{X}_t$ satisfies (A.1)-(A.3), as the case when $\bm{X}_t$ satisfies (A.1)*-(A.3)* can be proved exactly similarly. For $u\in \{i,j\}$, let $q_u$ and $r_u$ denote the quotient and remainder (respectively) on dividing $u$ by $p$, with the slightly different convention of redefining $r_u$ to be $p$ if $r_u=0$. Note that
$\chi_{tu} = X_{(t-1)r+q_u +\boldsymbol{1}(r_u\ne p)~,~r_u}.$ Hence, if we define $\mathcal{G}_a^b$ as the $\sigma$-field generated by the random variables $\{\chi_{si},\chi_{sj}:a\le s\le b\}$, then assuming $i\le j$ without loss of generality, we have:
$$\mathcal{G}_{1}^l \subseteq \mathcal{F}_{1}^{(l-1)r+q_j+1}(r_i,r_j)\quad\text{and}\quad \mathcal{G}_{l+k}^{\infty} \subseteq \mathcal{F}_{(l+k-1)r+q_i}^{\infty}(r_i,r_j),$$
where $\mathcal{F}_{a}^b(u,v)$ denotes the $\sigma$-field generated by the random variables $\{X_{tu},X_{tv}: a\le t\le b\}.$
Hence, denoting $\tilde{\xi}_{ij}(k)$ to be the maximal correlation coefficients for the process $\{\chi_{ti},\chi_{tj}: t=1,2,\ldots\}$, we have:
$$\tilde{\xi}_{ij}(k) \le \xi_{r_i,r_j}(kr-(q_j-q_i)-1)$$ for all $k > (q_j-q_i+1)/r$. Assumptions (A.1) and (A.2) for the process $\{\chi_t\}$ now follow immediately. For (A.3), note that if $n^{1/2}\xi_{r_i,r_j}(s_n) \rightarrow 0$ for some $s_n=o(n^{1/2})$, then $t_n := r^{-1}(s_n+(q_j-q_i)+1)$ satisfies $t_n = o(n^{1/2})$, and $n^{1/2}\tilde{\xi}_{ij}(t_n) \rightarrow 0$, thereby verifying assumption (A.3) for the process $\{\chi_t\}$. 
\end{proof}


\section{Simulation Study Details} \label{simul:details}
We study the following simulation paradigms.
\begin{enumerate}[wide=0pt]
\item \underline{Linear Gaussian Vector Auto-Regressive (VAR) Model (Figure \ref{fig:simul_eval}a left-column)}. Let $N(0,\eta^2)$ denote a normal random variable with mean $0$ and standard deviation $\eta$. We define $X_{tv}$ as a linear Gaussian VAR for $v=1,\ldots,4$ and $t=1,2,\ldots,1000$, whose true CFC has the edges $1\rightarrow 3,2\rightarrow 3, 3\rightarrow 4$. Let $X_{0v}\sim N(0,\eta^2)$ for $v=1,\ldots,4$ and for $t\geq 1$,
    \begin{align*}
    &X_{t1}=1+\epsilon_{t1},~&&X_{t2}=-1+\epsilon_{t2},\\
    &X_{t3}=2X_{(t-1)1}+X_{(t-1)2}+\epsilon_{t3},~&&X_{t4}=2X_{(t-1)3}+\epsilon_{t4}.
    \end{align*}
    where $\epsilon_t\sim N(0,\eta^2)$. It follows that the Rolled Markov Graph with respect to $\bm{\chi}_t$ has edges $1\rightarrow 3, 2\rightarrow 3, 3\rightarrow 4$. We obtain 25 simulations of the entire time series each for different noise levels $\eta \in \{0.1,0.5,1,1.5,2,2.5,3,3.5\}$.
    \vspace{2.5mm}
    \item \underline{Non-linear Non-Gaussian VAR Model (Figure (\ref{fig:simul_eval}a) $2^{\text{nd}}$ left-column)}. Let $U(0,\eta)$ denote a \emph{Uniformly} distributed random variable on the interval $(0,\eta)$. We define $X_{tv}$ as a non-linear non-Gaussian VAR for $v=1,\ldots,4$ and for $t=1,2,\ldots,1000$, whose true CFC has the edges $1\rightarrow 3, 2\rightarrow 3, 3\rightarrow 4$. Let $X_{0v}\sim U(0,\eta)$ for $v=1,\ldots,4$ and for $t\geq 1$, 
    \begin{align*}
    &X_{t1}\sim U(0,\eta),~&&X_{t2}\sim U(0,\eta),\\
    &X_{t3}=4\sin(X_{(t-1)1}+3\cos(X_{(t-1)2})+U(0,\eta),~&&X_{t4}=2\sin(X_{(t-1)3})+U(0,\eta).
    \end{align*}
    The Rolled Markov Graph with respect to $\bm{\chi}_t$ has edges $1\rightarrow 3, 2\rightarrow 3, 3\rightarrow 4$. We obtain 25 simulations of the entire time series each for different noise levels $\eta \in \{0.1,0.5,1,1.5,2,2.5,3,3.5\}$. 
    \vspace{2.5mm}
  \item \underline{Contemporaneous Vector Auto-Regressive Moving Average (VARMA) Model (Figure (\ref{fig:simul_eval}a) $3^{\text{rd}}$ left-column)}
  Let $N(0,\eta)$ denote a normal random variable with mean $0$ and standard deviation $\eta$. We define $X_v(t)$ as a linear Gaussian VAR for $v=1,\ldots,4$ whose true CFC has the edges $1\rightarrow 3,2\rightarrow 3, 3\rightarrow 4$. Let $X_v(0)=N(0,\eta)$ for $v=1,\ldots,4$, and $t=1,2,\ldots,1000$,
    \begin{align*}
    &\bm{X}_t=\begin{pmatrix}
                1\\
                -1\\
                1\\
                2
                \end{pmatrix}+
                \begin{pmatrix}
                    0 & 0 & 0 & 0\\
                    0 & 0 & 0 & 0\\
                    2 & 1 & 0 & 0\\
                    0 & 0 & 2 & 0
                \end{pmatrix} \bm{X}_{t-1} + 
                \begin{pmatrix}
                    0 & 0 & 0 & 0\\
                    0 & 0 & 0 & 0\\
                    0 & 0 & 0 & 0\\
                    2 & 1 & 0 & 0
                \end{pmatrix} \bm{\epsilon}_{t-1}+
                \begin{pmatrix}
                    1 & 0 & 0 & 0\\
                    0 & 1 & 0 & 0\\
                    2 & 1 & 1 & 0\\
                    2 & 1 & 1 & 1
                \end{pmatrix} \bm{\epsilon}_{t}
    \end{align*}
    It follows that the Rolled Markov Graph with respect to $\bm{\chi}_t$ has the edges $1\rightarrow 3, 2\rightarrow 3, 3\rightarrow 4$. Furthermore, $\bm{X}_t$ is faithful with respect to the same graph. We obtain 25 simulations of the entire time series each for different noise levels $\eta \in \{0.1,0.5,1,1.5,2,2.5,3,3.5\}$.
    \vspace{2.5mm}
  \item \underline{Continuous Time Recurrent Neural Network (CTRNN) (Figure (\ref{fig:simul_eval}a) right-column)}. We simulate neural dynamics by Continuous Time Recurrent Neural Networks (\ref{ctrnn}). $u_j(t)$ is the instantaneous firing rate at time $t$ for a post-synaptic neuron $j$, $w_{ij}$ is the linear coefficient to pre-synaptic neuron $i$'s input on the post-synaptic neuron $j$, $I_j(t)$ is the input current on neuron $j$ at time $t$, $\tau_j$ is the time constant of the post-synaptic neuron $j$, with $i,j$ being indices for neurons with $m$ being the total number of neurons. Such a model is typically used to simulate neurons as firing rate units,
    \begin{equation}\label{ctrnn}
    \tau_j \frac{du_j(t)}{dt}=-u_j (t) + \sum_{i=1}^m w_{ij} \sigma (u_i (t)) + I_j (t), \quad j=1,\ldots, m.
    \end{equation}
    We consider a motif consisting of $4$ neurons with $w_{13}=w_{23}=w_{34}=10$ and $w_{ij}=0$ otherwise. We also note that in Eq. \ref{ctrnn}, activity of each neuron $u_j(t)$ depends on its own past. Therefore, the Rolled Markov Graph with respect to $\bm{\chi}_t$ has the edges $1\rightarrow 3,2\rightarrow 3,3\rightarrow 4, 1\rightarrow 1, 2\rightarrow 2, 3\rightarrow 3, 4\rightarrow 4$. The time constant $\tau_i$ is set to 10 msecs for each neuron $i$. We consider $I_i(t)$ to be distributed as an independent Gaussian process with mean 1 and the standard deviation $\eta$. The signals are sampled at a time gap of $e \approx 2.72$ msecs for a total duration of $1000$ msecs.     We obtain 25 simulations of the entire time series each for different noise levels $\eta \in \{0.1,0.5,1,1.5,2,2.5,3,3.5\}$.

\end{enumerate}

The GC graph is computed using the \emph{Nitime} Python library, which fits an MVAR model followed by using the \emph{GrangerAnalyzer} to compute the Granger Causality \citep{rokem2009nitime}. For PC, TPCS and TPCNS, the computation is done using the \emph{TimeAwarePC} Python library \citep{biswas2022statistical2}. The TPCS and TPCNS algorithms estimate the Rolled Markov Graph from the signals with $\tau=1, r=2\tau$.

The choice of thresholds tunes the decision whether a connection exists in the estimate. For PC, TPCS and TPCNS, increasing $\alpha$ in conditional dependence tests increases the rate of detecting edges, but also increases the rate of detecting false positives. We consider $\alpha = 0.01, 0.05, 0.1$ for PC, TPCS and TPCNS. For GC, a likelihood ratio statistic $L_{uv}$ is obtained for testing $A_{uv}(k) = 0$ for $k=1,\ldots,K$. An edge $u\rightarrow v$  is outputted if $L_{uv}$ has a value greater than a threshold. We use a percentile-based threshold, and output an edge $u\rightarrow v$ if $L_{uv}$ is greater than the $100(1-\alpha)^{\text{th}}$ percentile of $L_{ij}$’s over all pairs of neurons $(i,j)$ in the graph \citep{schmidt2016multivariate}. We consider $\alpha = 0.01, 0.05, 0.1$ which corresponds to percentile thresholds of $99\%, 95\%, 90\%$. TPCNS is conducted with 50 subsamples with window length of $50$ msec and frequency cutoff for edges to be equal to $40\%$.

\section{Benchmark Datasets}\label{data:bm}
We use the River Runoff benchmark real dataset from \emph{Causeme} \citep{runge2019inferring,bussmann2021neural}. This is a real dataset that consists of time series of river runoff at different stations. This time series has a daily time resolution and only includes summer months (June-August). The physical time delay of interaction (as inferred from the river velocity) are roughly below one day, hence the dataset has contemporaneous time interactions. This dataset has 12 variables and 4600 time recordings for each variable.

The PC algorithm was implemented with p-value $0.1$ for kernel-based non-linear conditional dependence tests. In river-runoff data, the TPCS algorithm was implemented with $\alpha = 0.05$, and the TPCNS algorithm was implemented with $\tau=2$ and $4$ recordings respectively, as per specification in the datasets, and $\alpha = 0.05$ for the conditional dependence tests. TPCNS was conducted with $50$ subsamples with window length of $50$ recordings and frequency cut-off for edges to be equal to $0.1$.

\end{document}